\pdfoutput=1 
\newif\ifFull
\Fullfalse
\newif\ifAnon
\Anontrue

\documentclass{cccg26}
\usepackage{graphicx,amssymb,amsmath}

\newenvironment{restate}[1]{
 \begingroup
 \renewcommand{\thetheorem}{\ref{#1}}
 \begin{theorem}
}{
 \end{theorem}
 \addtocounter{theorem}{-1}
 \endgroup
}

\renewcommand{\emph}[1]{\textbf{\textit{#1}}}
\newcommand{\bfsw}{{\mathbf{bfsw}}}

\usepackage{booktabs}
\usepackage{mathtools}
\usepackage[capitalize, noabbrev]{cleveref}
\AddToHook{env/lemma/begin}{\crefalias{theorem}{lemma}} % temporarily tells cleveref that any label created using the theorem counter inside this environment should be treated as a lemma.

\begin{document}
\thispagestyle{empty}
\title{Layer-Respecting Linear Graph Layouts}
%
%\titlerunning{Abbreviated paper title}
% If the paper title is too long for the running head, you can set
% an abbreviated paper title here
%

\author{
    Alvin Chiu\thanks{University of California, Irvine,         \texttt{chiua13@uci.edu}}
    \and
	David Eppstein\thanks{University of California, Irvine, \texttt{eppstein@uci.edu}}
    \and
	Michael T. Goodrich\thanks{University of California, Irvine, \texttt{goodrich@uci.edu}}
    \and
	Songyu (Alfred) Liu\thanks{University of California, Irvine, \texttt{songyul4@uci.edu}}
}

% \author{Anonymous}

% \author{Alvin Chiu}{University of California, Irvine}{chiua13@uci.edu}{https://orcid.org/0009-0009-6863-859X}{}
% \author{David Eppstein}{University of California, Irvine}{eppstein@uci.edu}{}{}
% \author{Michael T. Goodrich}{University of California, Irvine}{goodrich@uci.edu}{https://orcid.org/0000-0002-8943-191X}{}
% \author{Songyu (Alfred) Liu}{University of California, Irvine}{songyul4@uci.edu}{https://orcid.org/0009-0003-1255-7156}{}

%\authorrunning{A. ~Chiu, D. ~Eppstein, M.T.~Goodrich, and S.~Liu}
% First names are abbreviated in the running head.
% If there are more than two authors, 'et al.' is used.
%

% \Copyright{Alvin Chiu, David Eppstein, Michael T. Goodrich, and Songyu (Alfred) Liu} 

% \ccsdesc[500]{Theory of computation}

% \keywords{graph drawing, layered graphs, arc diagrams, cylindric drawings, FPT algorithms} 

%\category{Short (Theory)}

\maketitle

\begin{abstract}
We show how to visualize a graph, $G=(V,E)$, as a layered drawing, layer-respecting arc diagram, or layer-respecting linear cylindric drawing with a minimum number of edge crossings, where layer-respecting means that layers appear in order on a single line and vertices are grouped by their layers. Even though this problem is NP-hard for general arc diagrams, we show how to create such diagrams with fixed-parameter tractable linear-time algorithms, where the parameter that allows this is the width of a layered graph. Such a layered graph can be obtained from a breadth-first search (BFS), in which case the width is upper bounded by a graph width parameter called the BFS width.
\end{abstract}

\section{Introduction}
A \emph{layering} of a graph $G=(V,E)$ 
is a partition, $\{L_1,L_2,\ldots, L_H\}$, of $V$ into disjoint subsets, 
which are referred to
as \emph{layers}, such that $\bigcup_{i=1}^H L_i=V$ and
for every edge $(v,w)\in E$, if $v\in L_i$ and $w\in L_j$,
then $|i-j|\le 1$; see, e.g.,~\cite{layers,bannister2019track}. The \emph{width} of a layered graph is the number of vertices in its largest layer \cite{healy_how_2002}. One nice property of layerings of a connected graph is that removing all the vertices in any layer that is not the topmost or bottommost layer will disconnect the graph. The pioneering work on graph 
separators by Lipton and Tarjan~\cite{lipton1979separator}
used a layering of a connected graph $G$ computed from a breadth-first
search (BFS) on an arbitrarily chosen vertex in $G$ by defining each layer by its distance from the starting vertex. We define such a layering formed using the BFS tree from some vertex as a \emph{BFS layering}.
Note that a graph layering is not necessarily a
BFS layering, but every BFS layering is a graph layering.
To see the latter, suppose we have a BFS layering with starting vertex $r$. $V$ is naturally partitioned into disjoint subsets by distances from $r$. For any edge $(v,w)\in E$ where $v\in L_i \text{ and } w\in L_j$, we have $|i-j|\le 1$ by the reverse triangle inequality for the distances between $v$, $w$, and $r$.
See Figure~\ref{fig:sugiyama}.

\begin{figure}[hbt]
    \centering
    \includegraphics[width=0.5\linewidth]{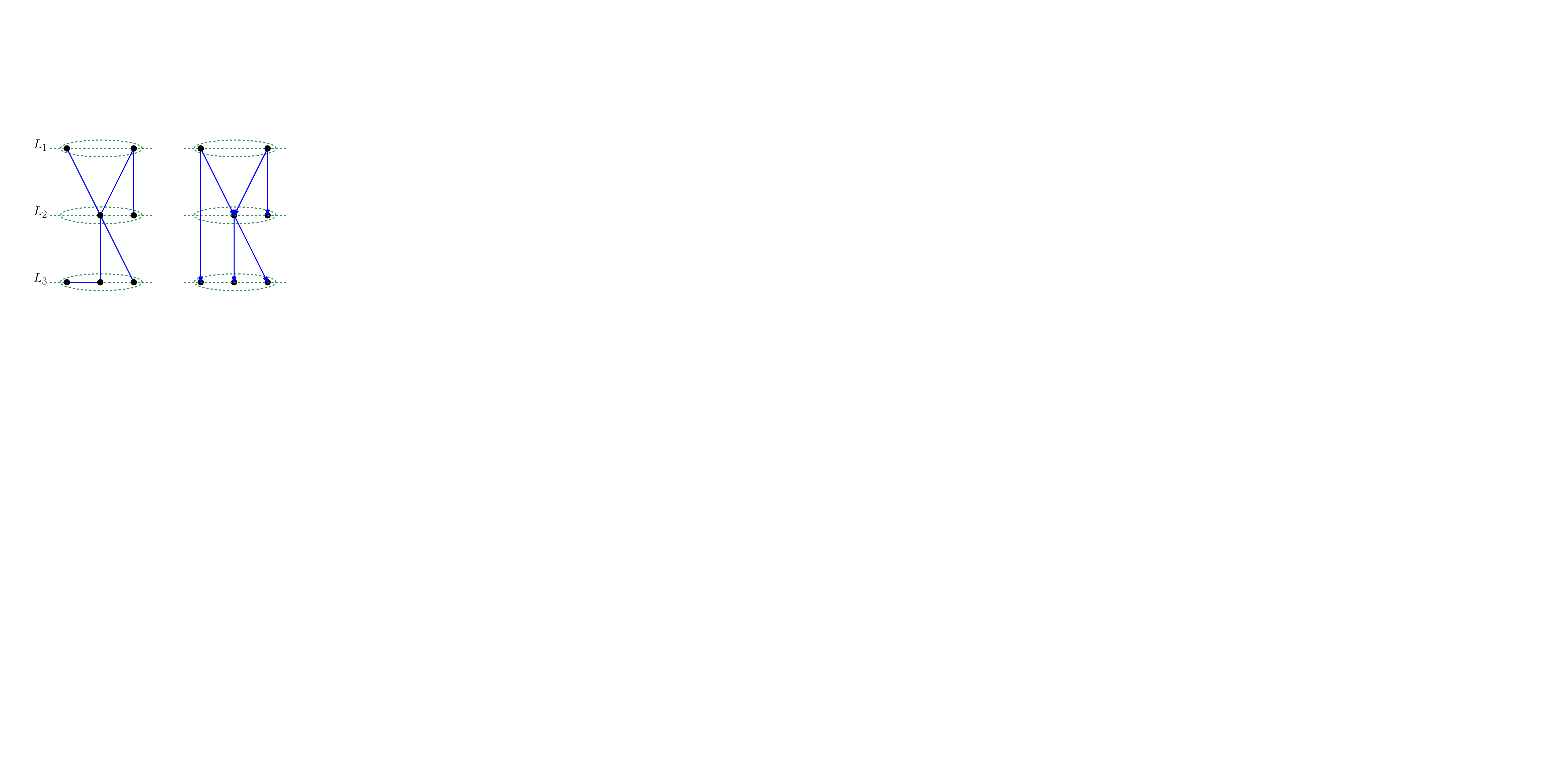}
    \caption{Left: a layered drawing of a layered graph. The partition indicated by $L_i$  is  a valid layering but not a BFS layering.
    Right: a Sugiyama-style drawing of a directed graph. The edge between the topmost vertex and bottommost vertex spans multiple layers. Thus, the partition indicated by $L_i$ is not a valid layering for the graph on the right.}
    \label{fig:sugiyama}
\end{figure}
The work on layered drawings of graphs by
Sugiyama, Tagawa, and Toda~\cite{sugiyama} pioneered the graph drawing paradigm now known
as a ``Sugiyama-style'' drawing. This style is for directed graphs, and
in this style, vertices are drawn as points on horizontal lines and each line is one layer; all edges point downwards; straight-line segments are used for edges joining two points in consecutive layers and polygonal chains may be used for edges joining two points spanning multiple layers. See, e.g.,~\cite{eppstein2007confluent}.
Strictly speaking, a Sugiyama-style drawing does not allow for edges
between vertices in the same layer, but we do not make this restriction
in this paper. See Figure \ref{fig:sugiyama}.
In a Sugiyama-style drawing, the number of layers is lower bounded by the number of vertices on a longest path, since each vertex on this path must be on a different layer \cite{ruegg_generalization_2016}. With any BFS layering, the number of layers is upper bounded by the number of vertices on a longest path.
% Sugiyama-style drawings are especially effective when
% the number of layers is small and the size of each layer is large,
% relative to the number of vertices,\footnote{In this paper,
%    we generally use $n$ to denote the number of vertices in a 
%    given graph, $G$, and $m$ to denote the number of edges in $G$.} $n$,
% but it is not as effective if the number of layers is large and the
% size of each layer is small.
Thus, we are interested in 
an alternate graph drawing paradigm.
% which is better suited for drawing layered graphs where there are
% a large number of layers and layers are small. 
Furthermore, we also study \emph{linear graph layouts}  \cite{bekos_online_2023, auer_plane_2011}, in particular arc diagrams and linear cylindric drawings. As it turns out, under certain constraints, crossing minimization for these different styles will share the same dynamic programming framework.

\paragraph*{Arc Diagrams}
One type of linear graph layout is
an \emph{arc diagram} of a graph, 
which is a style of graph drawing 
where the vertices of $G$ 
are points on a horizontal line, and the edges of $G$ are drawn as semicircular arcs or straight-line segments 
for edges joining pairs of consecutive points. Applications for arc diagrams include network 
visualization~\cite{burch2021dynamic,debiasi,komarek_network_2015}, as well as string structure visualization~\cite{wattenberg_arc_2002}.  
We focus our work on \emph{layer-respecting} arc diagrams in which the vertices of each layer of the input graph  appear consecutively within the linear ordering of vertices on the horizontal line and layers appear in order. See Figure \ref{fig:layer-respecting}. A formal definition is given in \cref{sec:preliminaries}. 
For more general linear graph layouts and Sugiyama-style drawings, ordering the vertices to minimize crossings can be a hard problem. For Sugiyama-style drawings, people have studied various crossing minimization heuristics. The most common technique is to consider two consecutive layers at a time \cite{bachmaier_crossing_2010}. Once the vertex ordering of one layer is fixed, the ordering of the vertices in the other layer determines the number of crossings. This ordering can be chosen by the barycenter heuristic \cite{sugiyama} or the sifting heuristic \cite{bachmaier_crossing_2010}.
However, we show that this crossing-minimization problem can be solved optimally in layer-respecting layouts, when the width of the layering is small.

\begin{figure}[hbt]
    \centering
    \includegraphics[width=\linewidth]{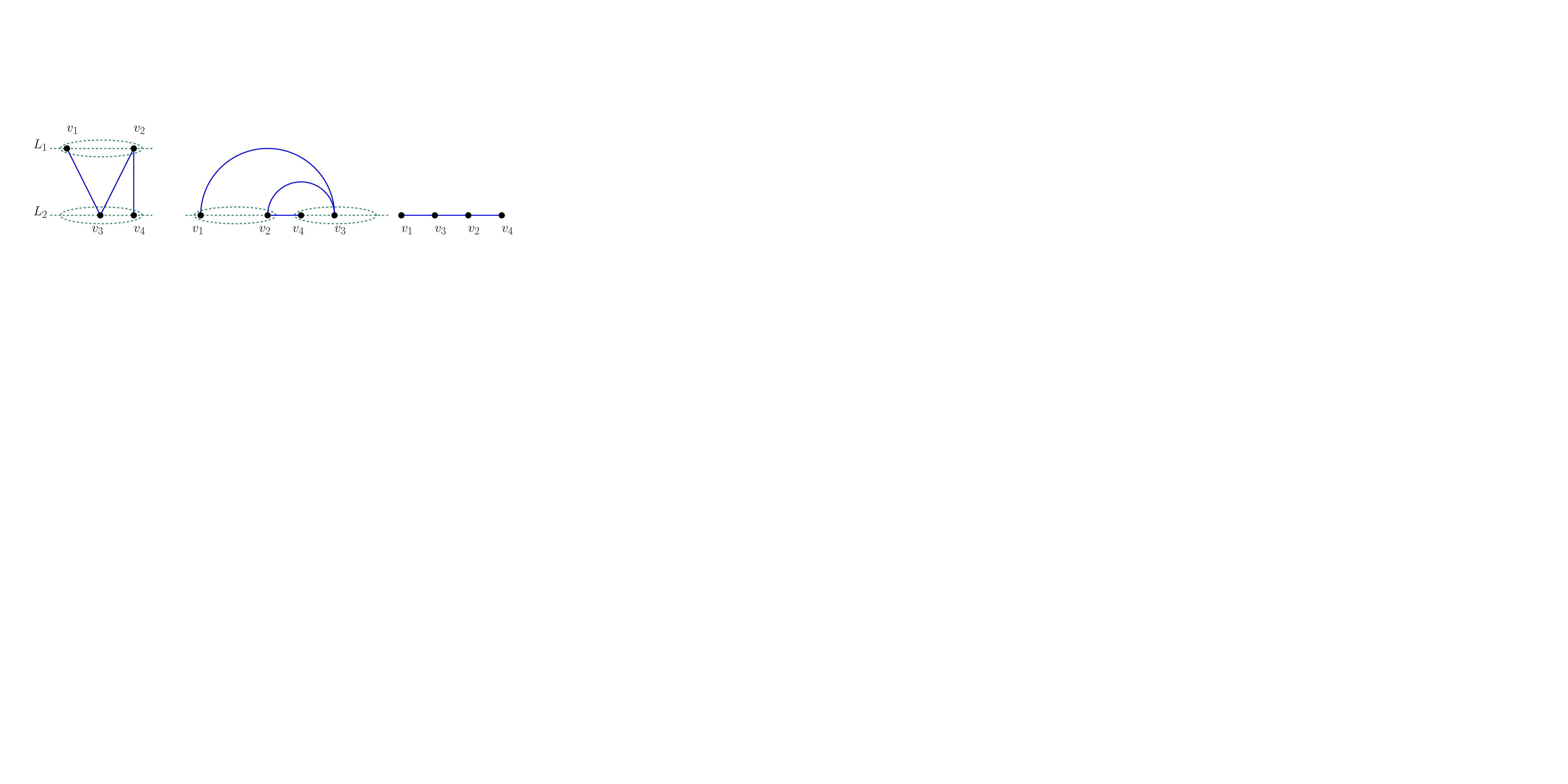}
    \caption{Left: a layered drawing of a layered graph.
    Middle: a layer-respecting arc diagram of this graph. Vertices of each layer of the input graph appear consecutively even though $v_4$ is now on the left of $v_3$. 
    Right: an arc diagram of this graph that is not layer-respecting. Vertices in $L_1$ of the input graph do not appear consecutively since $v_3$ is in between $v_1$ and $v_2$.
    }
    \label{fig:layer-respecting}
\end{figure}

To be more precise, arc diagrams may have edges drawn as curves (such as piecewise semicircular curves) that cross the horizontal line on which the vertices are arranged. For example, arcs composed of two semicircles that cross the horizontal line are called \emph{biarcs}~\cite{chaplick_monotone_2024}.
See Figure~\ref{fig:arc}. A \emph{proper arc} is a curve that does not cross this horizontal line, and a \emph{proper arc diagram} is an arc diagram with only proper arcs \cite{chaplick_monotone_2024}. 

\begin{figure}[hbt]
    \centering
    \includegraphics[width=0.5\linewidth]{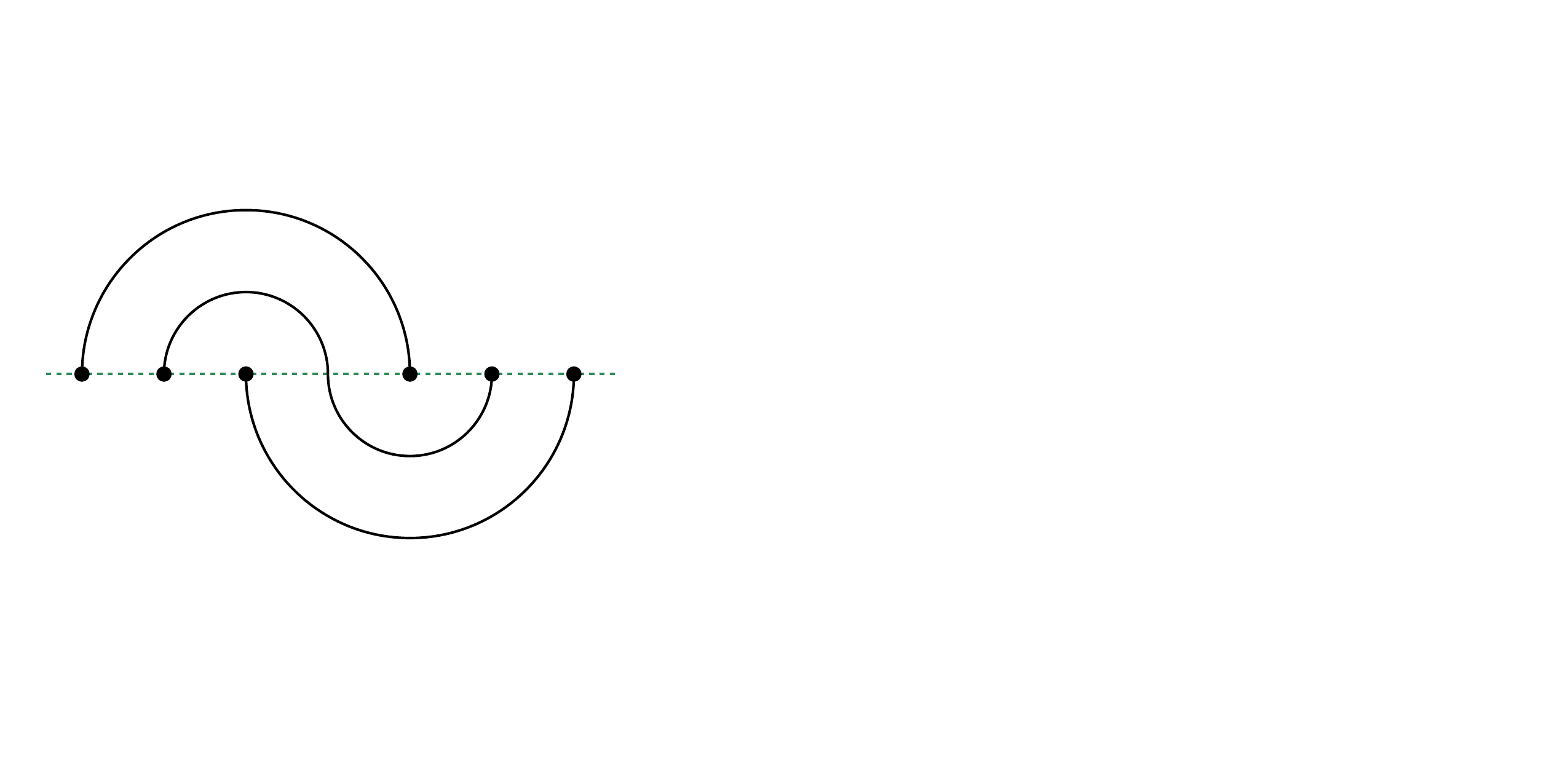}
        \caption{An example arc diagram with a biarc. 
    Notice that if we turn this biarc into a proper arc, we will introduce one crossing.
    }
    % \caption{Left: an example arc diagram with a biarc. 
    % Right: a 1-page proper arc diagram of the same graph. The definition of pages is in Section \ref{sec:pages}.
    % }
    \label{fig:arc}
\end{figure}

Our arc diagrams draw proper arcs as semicircles or as straight-line segments between consecutive points.
% but in some cases we  also use quarter-circles in order to provide room for additional edges to reach vertices without crossing the proper arcs that are incident to the same vertices. In this case, the tangent of the arc no longer forms a perpendicular angle to the horizontal line, but rather a 45-degree angle. See Section \ref{paragraph:crossings} and Figure \ref{fig:5-styles}.
It turns out that whether we allow non-proper arcs or not in an arc diagram has non-trivial complexity implications, as a maximal planar graph has a proper arc diagram without crossings if and only if it contains a Hamiltonian cycle \cite{bernhart_book_1979}, and testing whether a given graph has a crossing-free proper arc diagram is NP-complete~\cite{vlsi}.
Nevertheless, following in the tradition in graph drawing, we focus on proper arc diagrams, or simply ``arc diagrams'' when the context is clear \cite{klawitter_experimental_2018,bannister_crossing_2018}. In this paper, we are interested in layer-respecting arc diagrams, where the vertices in each layer have consecutive positions
in the linear ordering.

\paragraph*{Linear Cylindric Drawings}
Beyond arc diagrams, we also consider another drawing style for linear graph layouts,
\emph{linear cylindric drawings}~\cite{auer_plane_2011}. 
These drawings are made on the surface of a cylinder, with vertices placed on a line $L$  parallel to the axis of a cylinder. Edges are drawn as curves, monotone with respect to the direction of the cylinder's axis, that do not cross $L$. These can be either proper arcs or curves that wrap once around the cylinder.
See Figure \ref{fig:cylinder-spiral}. 

\begin{figure}[b]
    \centering
    \includegraphics[width=.9\linewidth]{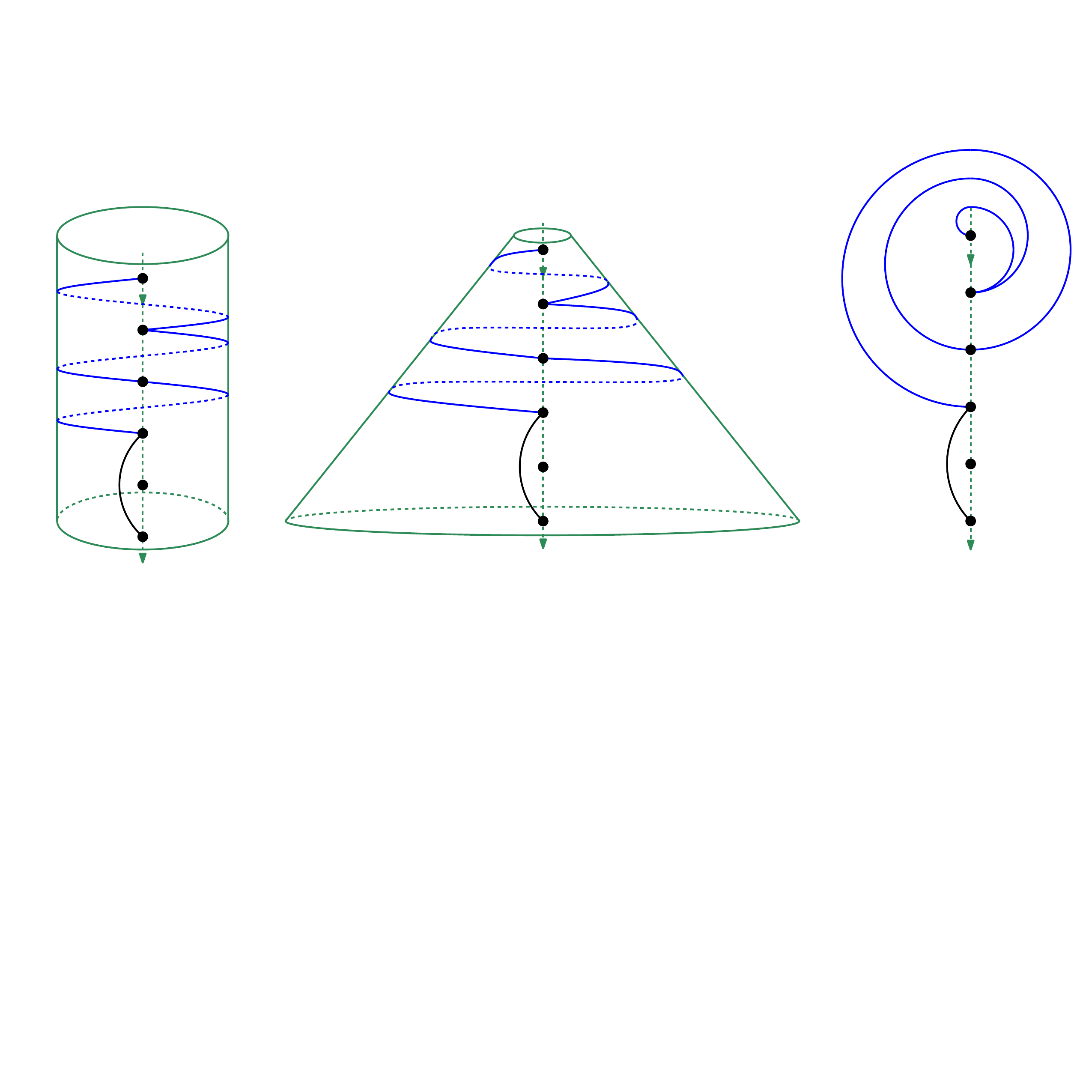}
    \caption{A linear cylindric drawing (left) transformed into a linear spiral drawing (right). We stretch one end of the cylinder in a topology-preserving manner, transforming the cylinder into an conical frustum during the process (middle).
    Arcs on the cylinder's surface (shown in black) will be preserved. The rest of the edges that wrap around the cylinder will become spirals in the linear spiral drawing. Note that vertices are on the front line of the cylinder, not its axis.}
\label{fig:cylinder-spiral}
\end{figure}

Non-crossing cylindric drawings correspond to a class of graphs called deque graphs  named after the double-ended queue data structure (a deque) and the formal definition can be found in \cite{auer_plane_2011}.
% To see this correspondence, suppose we have a linear graph layout. Scan the vertices from left to right. When we encounter the left endpoint of an edge, we add it to a deque, and we remove it from the deque when we encounter its right endpoint. If all edges can be processed in this way without violating the deque property, then we have a \emph{deque layout} and the graph is a \emph{deque graph}. For a general graph, a deque layout is not necessarily possible, and we allow crossings. 
To the best of our knowledge, minimizing the number of crossings over all linear cylindric drawings has not been studied before. This problem appears difficult since cylindric drawings contain two arc diagrams.

Linear cylindric drawings may be flattened into plane drawings for easier visualization, preserving the number of crossings, in at least two ways. In the first method, we stretch one end of the cylinder in a topology-preserving manner, keeping the other end fixed, flattening the cylinder into an annulus and producing what we call a \emph{linear spiral drawing}. (See Figure \ref{fig:cylinder-spiral} again.) 
In the second method, which is not topology preserving, we cut the cylinder along the line of the vertices, producing two copies of each vertex on the two cut copies of this line, and flattening the surface by unrolling the cylinder~\cite{auer_plane_2011}. This is called an 
\emph{unrolled cylinder} and Figure \ref{fig:5-styles} shows an unrolled cylinder that is also layer-respecting. 
Since the number of crossings will be preserved using both methods \cite{auer_plane_2011}, we consider unrolled cylinders when minimizing the number of crossings.
As we show in this paper, minimizing the number of crossings in a layer-respecting
linear cylindric drawing is similar in spirit to minimizing crossings in a layer-respecting arc diagram. 

% \begin{figure}[hbt]
%     \centering
%     \includegraphics[width=.9\linewidth, trim=15in 0in 0in 13in, clip]{graphics/5-styles.pdf}
%     \caption{A layer-respecting linear cylindric drawing.
%     Dashed ellipses indicate layers. The leftmost vertex is in its own layer.}
%     \label{fig:1-style}
% \end{figure}

\subsection{Related Prior Work} \label{sec:pages}
Arc diagrams are related to \emph{book drawings}, where the vertices lie on a horizontal line, and each edge is drawn on one of several half-planes bounded by the line, called \emph{pages}~\cite{masuda_crossing_1990, shahrokhi_book_1996}. An arc diagram with edges above and below the horizontal line can be seen as a 2-page book drawing\footnote{For convenience, straight-line segments between consecutive points are considered to be arcs above the line.}, where the half-plane above the line is considered page one and the one below is considered page two. Similarly, an arc diagram with all edges above the horizontal line is a 1-page book drawing. 
A \emph{book embedding} is a book drawing with zero edge crossings.
% , and the \emph{book thickness} of a graph is then the smallest number of half-planes (or ``pages'') needed to obtain a book embedding of the graph; see, e.g.,~\cite{bernhart_book_1979}.
The  \emph{$k$-page book crossing number} of a graph is the minimum number of edge crossings in any $k$-page book drawing of it. This minimum can be approximated within a multiplicative factor of $O(\log^2n)$ in polynomial time under certain conditions, where $n$ is the number of vertices~\cite{shahrokhi_book_1996}.

In prior work, Bannister and Eppstein~\cite{bannister_crossing_2018} study
the parameterized complexity of computing the 1-page and 2-page crossing number for graphs of low treewidth. However, there are important differences between the work of Bannister and Eppstein and our work. Their work uses Courcelle’s theorem and only proves the existence of fixed-parameter tractable algorithms, while we directly provide dynamic programming algorithms. Their complexity result for 2-page drawings is parameterized by the sum of the crossing number and treewidth, whereas our new results are parameterized only by  the \emph{width} of a layered graph. Thus, their approach with treewidth can only handle graphs with very small crossing numbers, while our width parameter can handle graphs with arbitrarily large crossing numbers.
In independent work, \cite{hegemann_chunky_2026} study a dynamic programming algorithm similar to ours. They propose a fixed-parameter tractable algorithm parameterized by the capacity to minimize crossings of chunky chains. In a chunky chain, vertices are partitioned into buckets, and short edges are edges with endpoints in the same bucket or adjacent buckets. Crossings are only between short edges by design. Capacity is a parameter that upper bounds the number of vertices in any bucket. 

% If a fixed ordering of the vertices along the horizontal line is given, then we have the corresponding definitions for \emph{fixed book thickness} and \emph{fixed book crossing number}~\cite{masuda_crossing_1990}. 
% The fixed book thickness problem is closely related to the coloring problem.
% Given a graph $G$ and a fixed ordering, we can construct a conflict graph $H$, where every vertex of $H$ corresponds to an edge of $G$ and two vertices of $H$ are neighbors if their corresponding edges cross each other in $G$  \cite{agrawal_eliminating_2024}. If edges of $G$ are assigned to $k$ pages, it is equivalent to assigning $k$ colors to the vertices of $H$. If two neighboring vertices in $H$ have different colors, then their corresponding edges in $G$ are assigned to different pages and no longer cross. Thus, computing the  fixed book thickness of $G$ is as hard as computing the chromatic number of $H$ \cite{agrawal_eliminating_2024}. The hardness of this coloring problem is discussed in \cite{bachmann_3-coloring_2023}.
If a fixed ordering of the vertices along the horizontal line is given, then we have the corresponding definition for \emph{fixed book crossing number}~\cite{masuda_crossing_1990}. 
A $k$-page book drawing with the minimum number of crossings can be found in $2^m  n^{O(1)}$ time\footnote{In this paper,
   we generally use $n$ to denote the number of vertices in a 
   given graph $G$, and $m$ to denote the number of edges in $G$. Note that one can assume that the number of pages $k$ is less than the number of edges $m$, because if $k \ge m$, we can trivially achieve zero crossings.}, for a fixed ordering and positive integer $k$ \cite{agrawal_eliminating_2024}. However, complexity issues remain, as it is still NP-hard to compute the 2-page fixed book crossing number, even for a fixed ordering of the vertices~\cite{masuda_crossing_1990}.
As a result, finding an arc diagram of a graph with the minimum number of edge crossings is NP-hard in general, even when the vertex ordering is fixed~\cite{masuda_crossing_1990}.

% Deque graphs are named after the data structure deque. Apart from deques, other data structures can be used \cite{heath_laying_1992, bekos_online_2023}. For example, if we use a stack, we have the definition of stack layouts and stack graphs. Multiple stacks can be used, in which case each edge is pushed to one of them when its left endpoint is processed. Such stack layouts are equivalent to book embeddings where each page corresponds to one stack \cite{heath_laying_1992}.
\begin{table*}
     \caption{The five drawing styles we consider. Here, $w$ is the width of the layered graph.}
\centering 
    \begin{tabular}{l c c c c}
    \toprule
    & \multicolumn{2}{c}{Arcs above the line} & \multicolumn{2}{c}{Arcs above/below} \\
    \cmidrule(r){2-3} \cmidrule(l){4-5}
    Drawing style & Style & Complexity & Style & Complexity \\
    \midrule
    Layered drawing &  1 & $O(w^4 (w!)^2 n)$ &  3 & $O(w^4 (w!)^2 2^{3 w^2}n)$  \\
    Layer-respecting arc diagram &  2 & $O(w^4 (w!)^2 n)$ &  4 & $O(w^4 (w!)^2 2^{3 w^2}n)$ \\
    Layer-respecting linear cylindric drawing & \multicolumn{2}{c}{Not applicable} &  5 & $O(w^4 (w!)^2 4^{3 w^2}n)$\\
    \bottomrule
    \end{tabular}
    \label{tab:5-styles}
\end{table*}

\begin{figure*}
\centering
    \includegraphics[width=.9\linewidth]{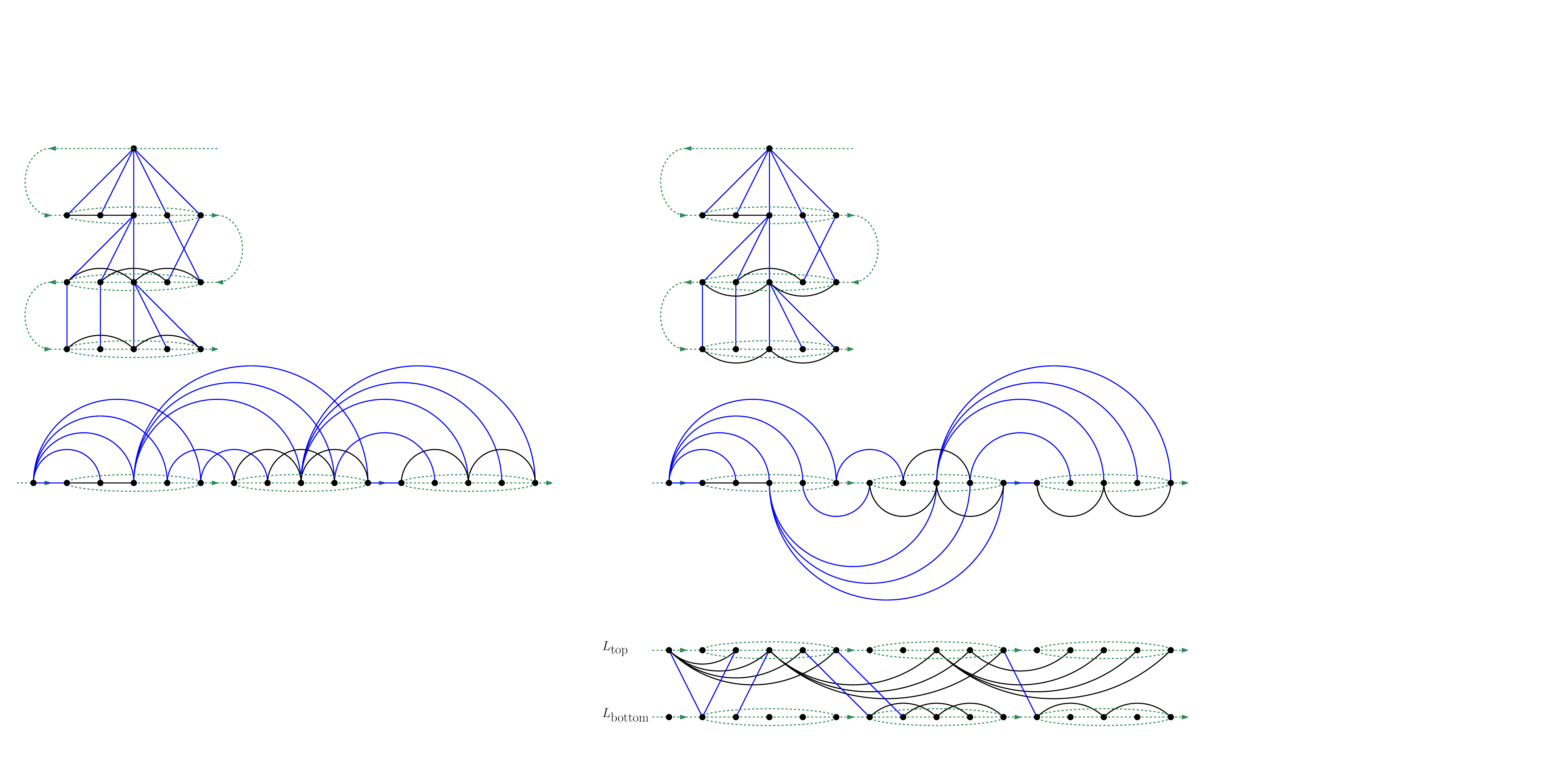}
    \caption{The five drawing styles we consider. 
   Left: styles 1 and 2 in column 1.  Right: styles 3, 4, and 5 in column 2. All the drawings correspond to the same graph. }
    \label{fig:5-styles}
    \vspace*{-12pt}
\end{figure*}

\subsection{Our Results}
The minimum width of a graph over all its possible layerings is within a constant factor of the graph's bandwidth \cite{dubey_hardness_2011} and the best known polynomial time algorithm only provides a $O(\log^3 n \sqrt{\log \log n})$-approximation for bandwidth on general graphs with high probability \cite{dunagan_euclidean_2001}. 
Thus, there is no efficient way to find a layering that achieves minimum width. In this paper we focus on graphs that either already have
a layering with small width given as part of the input, or graphs with small BFS width, which is the maximum width over all
possible BFS layerings for a graph~\cite{eppstein_et_al:LIPIcs.ESA}. Interestingly, Eppstein, Goodrich,
and Liu~\cite{eppstein_et_al:LIPIcs.ESA} show that this parameter is related to other graph parameters,
including the bandwidth and pathwidth of a graph.
In an arc diagram where vertices are at consecutive integer coordinates, the height of the arc diagram is upper bounded by the bandwidth of the vertex ordering \cite{eppstein_et_al:LIPIcs.ESA}. In both of the cases that we consider in this paper, the input graph's layering directly provides a vertex ordering with small bandwidth \cite{dubey_hardness_2011}, so its arc diagram will fit in a wide and short bounding box. The layer-respecting constraint further allows us to minimize the number of crossings.

For layered graphs, the layering we use can either
be given in advance or we can compute a BFS layering for the input graph.
In either case, we are interested in \emph{layer-respecting} linear graph layouts, where the vertices in each
layer are placed at consecutive positions as a group, so as to still allow visualization of the layering (e.g., with colors
and/or dashed grouping ovals as shown in Figure \ref{fig:5-styles}).
However, this drawing paradigm does not prescribe the ordering of the vertices in each layer; hence, to optimize visualization aesthetics we desire orderings that minimize the number of edge crossings.

In this paper, we show that when the graph has bounded width (in the case of a prescribed layering) or bounded BFS width~\cite{eppstein_et_al:LIPIcs.ESA} (if the graph does not come with a given layering), then we can minimize the number of crossings in a layer-respecting arc diagram or layer-respecting
linear cylindric drawing with fixed-parameter tractable time complexity. Furthermore, our framework can construct a layered drawing with the same time complexity; the constraints for which are described in Section \ref{sec:preliminaries}. 
% Our techniques also apply to general graphs with no edges skipping layers, namely, the two endpoints of any edge are in the same layer or consecutive layers. We use BFS layers as a concrete example because they are easy to obtain.
%
We consider five different drawing styles; see Figure \ref{fig:5-styles} and \cref{tab:5-styles}.
% Our specific results are shown in Table \ref{tab:results}. 

%%\mike{Can we also include a figure showing the 5 drawing styles?}
%\alfred{The two column layout matches the table below. It is a bit small and maybe we can stretch it vertically}

% \begin{table*}
%      \caption{The five drawing styles we consider. Here, $w$ is the width of the layered graph.      }
% \centering 
%     \begin{tabular}{lcc}
%     \toprule
%     &  Arcs above the line & Arcs above/below  \\
%     \midrule
%     Layered drawing & style 1 $O(w^4 (w!)^2 n)$ & style 3 $O(w^4 (w!)^2 2^{3 w^2}n)$  \\
%     Layer-respecting     arc diagram & style 2 $O(w^4 (w!)^2 n)$ & style 4 $O(w^4 (w!)^2 2^{3 w^2}n)$ \\
%     Layer-respecting     linear  cylindric drawing & not applicable & style 5 $O(w^4 (w!)^2 4^{3 w^2}n)$\\
%     \bottomrule
%     \end{tabular}
%     \label{tab:5-styles}
% \end{table*}

% \begin{table}
%      \caption{Our results. Here, $w$ is the width of the layered graph. 
%      }
% \centering   
%     \begin{tabular}{lc}
%     \toprule
%     & Time complexity \\
%     \midrule
%         Styles 1 and 2 &  $O(w^4 (w!)^2 n)$   \\
%     Styles 3 and 4 & $O(w^4 (w!)^2 2^{3 w^2}n)$  \\
%  Style 5 & $O(w^4 (w!)^2 4^{3 w^2}n)$ \\

%     \bottomrule
%     \end{tabular}
% \label{tab:results}
% \end{table}

\section{Preliminaries} \label{sec:preliminaries}
\subsection{BFS Width}
\emph{BFS width} is a graph width parameter introduced by Eppstein, Goodrich, and Liu~\cite{eppstein_et_al:LIPIcs.ESA}, whose definition we adapt to match the graph drawing literature \cite{battista_graph_1998}.
% Let $G=(V,E)$ be an undirected unweighted graph, and let a layering of $G$ be $\{L_1,L_2,\ldots, L_H\}$. 
% the \emph{width} of $G$ is $\max_{i=1}^H |L_i|$ and the \emph{height} of $G$ is the number of layers, $H$.
Given a layering of $G$ $\{L_1,L_2,\ldots, L_H\}$, the \emph{span} of an edge  $(u_1, u_2)$ with $u_1 \in L_i$ and  $u_2 \in L_j$ is $|i-j|$ \cite{battista_graph_1998}. Recall that by the definition of layering, edges are between consecutive layers or within the same layer, so the span of any edge is at most 1 for layered graphs.

We  obtain a layering either by receiving a layering as a part of the input or by performing a breadth-first search of $G$ from an arbitrary starting vertex, $v$, where the edges traversed define a breadth-first search tree $T$ such that the depth of each
vertex in $T$ equals its distance from $v$, the number of edges in a shortest path from $v$.
We define the \emph{BFS layer} $i$, denoted as $L_i$, to consist of all vertices at distance $i - 1$ from $v$; if $u \in L_i$, we say that $u$ has \emph{layer number} $i$. 
Then we can finally define the \emph{BFS width} of $G$, denoted by $\bfsw(G)$, as the maximum width, taken over all possible starting vertices $v\in V$ for a BFS layering.
Since edges are between consecutive layers or within the same layer, any vertex can only be connected to vertices in at most three layers excluding itself.
\begin{lemma}
    Graph $G$ has maximum degree $\Delta$ that is at most $3 \bfsw(G) - 1$. The number of edges satisfies
    $ |E| \le {(3 \bfsw(G) - 1) |V|} /   {2}$.
\end{lemma}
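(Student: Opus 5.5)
Fix an arbitrary vertex $u\in V$ and take a BFS layering $\{L_1,\dots,L_H\}$ of $G$ from any starting vertex. For concreteness, start the BFS from $u$ itself, or from any vertex; the argument is the same. Let $u\in L_i$. Every BFS layering has width at most $\bfsw(G)$, so each layer contains at most $\bfsw(G)$ vertices. Because every edge has span at most $1$, every neighbor of $u$ lies in $L_{i-1}\cup L_i\cup L_{i+1}$, where we treat $L_0$ and $L_{H+1}$ as empty.

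The count then goes as follows. The neighbors of $u$ form a subset of $(L_{i-1}\cup L_i\cup L_{i+1})\setminus\{u\}$, since $G$ has no self-loops. This set has size at most $3\bfsw(G)-1$: the three layers together hold at most $3\bfsw(G)$ vertices, and $u$ itself is one of them. Hence $\deg(u)\le 3\bfsw(G)-1$. Since $u$ was arbitrary, $\Delta\le 3\bfsw(G)-1$.

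For the edge bound, apply the handshake lemma: $2|E|=\sum_{v\in V}\deg(v)\le \Delta|V|\le (3\bfsw(G)-1)|V|$. Dividing by $2$ gives $|E|\le (3\bfsw(G)-1)|V|/2$.

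Neither step is a real obstacle. The only point needing care is the subtraction of $u$ from its own layer, which gives $-1$ rather than a bound of $3\bfsw(G)$. This uses the assumption that $G$ is simple, with no loops or multi-edges.
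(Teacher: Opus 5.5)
Your proof is correct and matches the paper's own justification: a vertex's neighbors lie in at most three consecutive layers, each with at most $\bfsw(G)$ vertices; excluding the vertex itself gives $3\bfsw(G)-1$, and the handshake lemma then gives the edge bound. As a side remark, your choice to start the BFS at $u$ gives more: then $L_2 = N(u)$ exactly, so $\deg(u) \le \bfsw(G)$, hence $\Delta \le \bfsw(G)$ and $|E| \le \bfsw(G)\,|V|/2$, which improves on the stated bound by nearly a factor of three.
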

We can then express the $O(|V|+|E|)$ time complexity of BFS in terms of the BFS width.
\begin{lemma}   \label{lemma:bfs}
    % The time complexity of BFS is $O({(3 \bfsw(G) + 1) |V|} /   {2})$.
     The time complexity of BFS is $O(\bfsw(G) \cdot |V|)$.
\end{lemma}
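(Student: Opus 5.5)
The plan is to combine the standard $O(|V|+|E|)$ running-time bound for breadth-first search with the edge bound from the preceding lemma. First we recall why BFS runs in $O(|V|+|E|)$ time with adjacency lists. Each vertex is enqueued and dequeued at most once, which costs $O(|V|)$ in total. When a vertex is dequeued its adjacency list is scanned once, so the total scanning cost is $\sum_{v} \deg(v) = 2|E|$. Assigning layer numbers, i.e.\ depth in the BFS tree $T$, adds only constant work per vertex.

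Next we substitute the bound from the previous lemma, $|E| \le (3\bfsw(G)-1)|V|/2$, into $O(|V|+|E|)$. This gives
\[
O\!\left(|V| + \tfrac{(3\bfsw(G)-1)|V|}{2}\right) = O\!\left(\tfrac{(3\bfsw(G)+1)|V|}{2}\right) = O(\bfsw(G)\cdot |V|),
\]
using $\bfsw(G)\ge 1$ for any nonempty graph.

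The only point needing care is that the previous lemma's degree bound really applies to every vertex. Here $\bfsw(G)$ is the maximum width over all BFS layerings, so it bounds the width of whichever BFS layering we happen to compute, and in fact of every BFS layering. Fix any BFS layering, say the one rooted at $v$. A vertex $u \in L_i$ has all its neighbours in $L_{i-1}\cup L_i\cup L_{i+1}$. These three layers contain at most $3\bfsw(G)$ vertices, one of which is $u$ itself, so $\deg(u) \le 3\bfsw(G)-1$. Summing over all vertices and halving gives the edge bound used above.

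We also take $G$ to be connected, as is implicit in defining $\bfsw$ through BFS from arbitrary starts. If $G$ is not connected, we run BFS in each component separately and the same accounting still holds. There is no real obstacle; the step to double-check is just this degree argument, since the rest is direct substitution.
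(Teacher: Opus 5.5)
Your proposal is correct and follows the paper's own argument: the paper substitutes the edge bound $|E| \le (3\bfsw(G)-1)|V|/2$ from the preceding lemma into the standard $O(|V|+|E|)$ BFS bound, giving $O((3\bfsw(G)+1)|V|/2) = O(\bfsw(G)\cdot |V|)$. Your re-derivation of the degree bound, which the paper gives only as a one-line remark before the preceding lemma, and your note on disconnected graphs are sound extra care.
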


\subsection{Layer-Respecting Linear Graph Layouts}   
%\mike{Please rewrite this section and the rest of the paper to allow for a layering to be given as a part of the input.}
Suppose we are given a graph $G=(V,E)$ and a layering $\{L_1,L_2,\ldots, L_H\}$ of $G$.
% \begin{definition}
    A \emph{linear layout} is a bijective function $\sigma : V \to \{1, \dots, |V|\}$ that specifies the positions of the vertices from left to right \cite{auer_plane_2011}.
        A \emph{linear graph layout} of $G$ is a drawing in which the horizontal positions of the vertices are specified by $\sigma$.
% \end{definition}
 The function $\sigma$ determines how $V$ is arranged in a linear order, which does not depend on the layering in general, namely the partition of $V$.
In this paper, a linear graph layout refers to either an arc diagram or a linear cylindric drawing. In a linear cylindric drawing, the top and bottom lines contain the same vertices in the same order, hence the horizontal positions of vertices can be specified by a single function. Now we add a constraint on a linear graph layout so that it ``respects'' the layering.
% \begin{definition}\label{def:layer-respecting}
    A linear graph layout is \emph{layer-respecting} if for all vertices $u_1 \in L_i$ and  $u_2 \in L_j$ where $i < j$, we have $\sigma(u_1) < \sigma(u_2)$.  
% \end{definition}
While the vertices are arranged left-to-right by their layer numbers, vertices with the same layer number can be in any order; hence, we are free to rearrange vertices within each layer, e.g., to minimize the number of edge crossings.

\subsection{Layered Drawings}  
In addition to the layer-respecting linear graph layouts (styles 2, 4, and 5), we also consider layered drawings (styles 1 and 3).
Conceptually, our framework for layered drawings of undirected graphs resembles the Sugiyama method for layered drawings of directed graphs \cite{battista_graph_1998}, but there are significant differences. 

In our definition of layering, any edge of $G$ has span at most one, whereas the Sugiyama method only allows edges with span at least one (forbidding edges within the same layer).
Edges with both endpoints on the same layer are called \emph{intralayer edges}, while edges with endpoints on different layers are called \emph{interlayer edges}. \emph{Intralayer crossings} are crossings between two intralayer edges, \emph{interlayer crossings} are those between two interlayer edges, and \emph{mixed crossings} are those between one intralayer edge and one interlayer edge \cite{bachmaier_crossing_2010}.
In typical layered drawings of directed graphs (i.e. Sugiyama-style), all edges are interlayer and drawn as straight line segments \cite{battista_graph_1998}. Bachmaier, Buchner, Forster, and Hong~\cite{bachmaier_crossing_2010} study drawing extended level graphs, which extend layered drawings to allow intralayer edges, drawn as arcs above the horizontal lines. They show that the one-sided crossing minimization problem for extended level graphs is NP-hard, and they provide heuristics for minimizing edge crossings. 

We present a general framework that can find layered drawings of undirected graphs using dynamic programming
formulations that are based on characterizing the interactions between consecutive layers, since
all of our drawing styles rely on layers. For any given layer $i$, we upper bound the number of states it can have. With this upper bound, we enumerate all the possible ways in which two consecutive layers interact. For every possible state of layer $i$, we can obtain a partial drawing with the minimum number of crossings up to this layer. At the last layer, we find the optimum. More details on this dynamic programming framework will be explained later.

\paragraph*{Unnecessary Crossings}  \label{paragraph:crossings}
To avoid unnecessary  crossings, we add these edge constraints:
\begin{enumerate}
    \item Each layer is drawn as an arc diagram, where the arcs are quarter circles.
    \item Interlayer edges are drawn as straight line segments that form an angle of at least 45 degrees with respect to the horizontal direction. 
    \item Arcs  with both endpoints in one layer do not cross arcs  with both endpoints in another layer.
\end{enumerate}
These constraints are part of our drawing model and always achievable as shown below. Thus, if a straight line segment between two layers share an endpoint with an arc in those layers, they will not cross twice, and the shared endpoint is not considered a crossing. See \cref{fig:5-styles}.
These apply to layered drawings and style 5. For style 5, ``layer'' in these constraints would refer to the top and bottom lines, not layers of the input graph. See  \cref{fig:5-styles}.
Constraint 2 and 3 can always be satisfied by increasing the distances between layers. For each layer~$i$, suppose $l_i$ (resp.,~$r_i$) is the $x$-coordinate of the leftmost (resp., rightmost) vertex in layer $i$. If the vertical distance between layers $i$ and $i-1$ is at least $\max \{|r_{i-1} - l_i|, |l_{i-1} - r_i|\}$, then condition 2 will hold for interlayer edges between them.
If the vertical distance between layers $i$ and $i-1$ is at least $ (\sqrt{2} - 1) \max \{r_{i-1} - l_{i-1},   r_i  - l_i \}$, then the longest possible arc below  layer $i-1$ will not cross  the longest possible arc above  layer $i$, and condition 3 will hold.
Thus, the only  crossings are those forced by the topology. We call this style a ``layered drawing.'' 

We can convert such a layered drawing into layer-respecting arc diagrams. On a horizontal line, arrange the vertices from left to right in order of their layer numbers. For any positive integer $k$, the vertices of layer $2k - 1$ are placed in their original order (left-to-right), followed by the vertices of layer $2k$ placed in reverse order (right-to-left). Figure \ref{fig:5-styles} shows this connection between layered drawings and layer-respecting arc diagrams. Note that whether arcs are drawn above or below the horizontal line does not affect the layer-respecting nature of the resulting arc diagram. This observation is only illustrative and not used in our proofs.

\section{Our Framework for the Different Drawing Styles}
%\mike{Please separate the results into two subsections--one for arc diagrams and one for cylindric layouts. Don't worry about the page count limit for now. We can fix that later. Right now reviewers will think the paper is too short and does not have enough results, so we need to space out the results better. And remember to allow for the layering to be given as part of the input}
In this section, we provide algorithms to minimize the number of edge crossings according to the five drawing styles. 
Our framework for achieving our results is based on a simple observation: Given a graph $G=(V,E)$, and a layering, any edge of $G$ has span at most one by our definition of layering; hence,
the number of new crossings introduced by layer $i$ only depends on the previous layer $i-1$, which allows us to process $G$ layer by layer to minimize edge crossings.

Thus, in order for us to implement our framework in each of the five drawing styles, 
it is sufficient for us to consider the possible states for any given layer and characterize the interactions
that can occur between consecutive layers based on the particular drawing style being considered.
We perform this analysis in the following subsections.

\subsection{Arc Diagrams}
In an arc diagram with edges drawn above the horizontal line (style 2), the number of crossings is completely determined by the positions of the vertices, described by the bijective function $\sigma$  \cite{agrawal_eliminating_2024}. 
% To make this more precise, we introduce a definition in \cite{agrawal_eliminating_2024}.
Without loss of generality, we let $\sigma(u) < \sigma(v)$ for any (undirected) edge $(u, v)$. Then for two edges $e = (u, v)$ and $e' = (u', v')$  drawn on the same side of the horizontal line, we can say that $e$ and $e'$ \emph{cross} with respect to $\sigma$ if their endpoints interleave, i.e.,
\(
    \sigma(u) < \sigma(u') < \sigma(v) < \sigma(v') \ \text{or} \ \sigma(u') < \sigma(u) < \sigma(v') < \sigma(v).
\)

\begin{lemma}[Agrawal \textit{et al.}~\cite{agrawal_eliminating_2024}]\label{lemma:cross}
In an arc diagram, two edges can be drawn on the same side of the horizontal line without crossings if and only if they do not cross with respect to $\sigma$.
\end{lemma}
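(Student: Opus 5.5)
We prove the two directions separately, working with proper arcs drawn in the closed upper half-plane $\{y \ge 0\}$ bounded by the horizontal line $\ell$. Edges are drawn as semicircles, or as segments along $\ell$ between consecutive points. Let $e=(u,v)$ and $e'=(u',v')$ have endpoints at positions $a=\sigma(u)<b=\sigma(v)$ and $a'=\sigma(u')<b'=\sigma(v')$.

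\textbf{If they do not cross with respect to $\sigma$, a crossing-free drawing exists.} Then the intervals $[a,b]$ and $[a',b']$ are either nested or disjoint, possibly sharing endpoints. We draw each edge as the semicircle with diameter equal to its interval.

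If the intervals are disjoint, the two semicircles lie over disjoint $x$-ranges and can meet only at a shared endpoint. A shared endpoint is not counted as a crossing.

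If the intervals are nested, say $[a',b'] \subseteq [a,b]$, we compare distances from the center $c=(a+b)/2$ of the outer disk. Every point of the inner semicircle is a convex combination of points of $[a',b']$ on or near the chord, so it lies in the closed disk of radius $(b-a)/2$ about $c$. Equality can occur only at a shared endpoint. Hence the inner arc lies strictly inside the outer one except at shared endpoints.

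A cleaner way to see the nested case is algebraic. Two circles centered on $\ell$ meet in at most two points, which are symmetric about $\ell$, so they have at most one common point in the upper half-plane. That point would be where the two semicircles cross, and it is located at the abscissa solving $(x-c)^2-r^2=(x-c')^2-r'^2$. For nested or disjoint intervals this abscissa lies outside the open intervals, or coincides with a shared endpoint. Degenerate segment edges between consecutive points are handled directly: no other vertex lies between their endpoints, so no other arc can interleave with them.

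\textbf{If they cross with respect to $\sigma$, every drawing has a crossing.} Suppose $a<a'<b<b'$. Take any curve $\gamma$ for $e$ in the upper half-plane from $a$ to $b$, touching $\ell$ only at its endpoints. Together with the segment $[a,b]$ of $\ell$, it forms a closed Jordan curve $J$. By the Jordan curve theorem, $J$ bounds a region $R$ contained in the upper half-plane.

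Now consider a curve $\gamma'$ for $e'$. Just after leaving $a'$ it enters the open upper half-plane near $a'\in(a,b)$, so it starts inside $R$. Near $b'\notin[a,b]$ it lies outside the closure of $R$. Since $\gamma'$ stays strictly above $\ell$ except at its endpoints, it cannot exit $R$ through the segment $[a,b]$. It therefore must cross $\gamma$.

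The main obstacle is this converse direction, specifically the topological step of justifying which side of $J$ the curve starts and ends on. We would handle it by noting that points just above the interior of $[a,b]$ lie in the bounded component of the complement of $J$, while points above $\ell$ outside $[a,b]$ can be joined to infinity without meeting $J$.
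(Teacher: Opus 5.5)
The paper gives no proof of this lemma; it is cited from Agrawal et al. So your argument is a self-contained substitute, not a variant of something in the paper. Your route is the standard one and is correct: semicircles plus a radical-axis computation for sufficiency, and the Jordan curve theorem for necessity.

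It also gives something the pairwise citation does not state. You draw \emph{all} edges as semicircles at once, and two circles centered on $\ell$ share at most one point of the closed upper half-plane. So you get a single drawing in which non-interleaving same-side pairs are disjoint and interleaving pairs cross exactly once. Together with your necessity direction, this is exactly what the dynamic program in the proof of \cref{thm:arc} uses when it counts crossings as interleaving same-side pairs.

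Several steps need tightening.
\begin{itemize}
\item The ``convex combination'' sentence is not a valid reason, since points of a semicircle are not convex combinations of chord points. Use instead $|p-c|\le|p-c'|+|c'-c|\le r'+(r-r')=r$, with equality forcing $p\in\ell$. Alternatively, rely on the radical-axis argument and add the missing check: $f(x)=(x-a)(x-b)-(x-a')(x-b')$ is linear and negative at $a$ and $b$ when $a<a'<b'<b$, and it changes sign on $(b,a')$ when $a<b<a'<b'$. So wherever $f$ vanishes, $(x-a)(x-b)\ge 0$, with equality only at a shared endpoint, and no common point lies strictly above $\ell$.
\item In the necessity direction, your closing claim is false in general: points above $\ell$ with abscissa outside $[a,b]$ need not reach infinity without meeting $J$, because $\gamma$ may bulge out over $x>b'$. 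Only points near $(b',0)$ are guaranteed to be exterior, since they can drop into the open lower half-plane; that is all your main text actually uses.
\item The claim that points of $\gamma'$ near $(a',0)$ lie in $R$ should be derived from the boundary property of the Jordan curve theorem. The point $(a',0)\in J$ lies in the closure of $R$, and a small disk around it misses $\gamma$. The lower half of that disk lies in the unbounded component, so its upper half must lie in $R$.
\item You assume arcs meet $\ell$ only at their endpoints. That is the usual page convention, and it costs nothing here, since segments between consecutive vertices never belong to an interleaving pair. If proper arcs may touch $\ell$ between vertices, close $\gamma$ with a semicircle below $\ell$ instead of the segment $[a,b]$. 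The same argument then goes through, since $\gamma'$ can meet that lower semicircle only at the vertices $a$ and $b$, which it does not contain.
\end{itemize}
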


\begin{theorem}\label{thm:arc}
    If we have a graph with a layering of width $w$, then we can find a layer-respecting arc diagram with a minimum number of crossings for the given layering in time:
\begin{enumerate}
    \item $T_w \in O(w^4 (w!)^2 n)$ using style 2;
    \item $T_w \in O(w^4 (w!)^2 2^{3 w^2}n)$ using style 4.
    
\end{enumerate}
\end{theorem}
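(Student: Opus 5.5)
We plan to prove \cref{thm:arc} by dynamic programming over the layers $L_1,\dots,L_H$. The key is that in a layer-respecting layout every edge has span at most one. So each edge has both endpoints in one layer or in two consecutive layers, and its arc lies over the interval of positions occupied by $L_i\cup L_{i+1}$ for some $i$.

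We first classify crossings using \cref{lemma:cross}. Two edges drawn on the same side cross exactly when their endpoints interleave under $\sigma$. An edge of $L_i$ or between $L_i,L_{i+1}$ can interleave only with edges whose endpoint intervals overlap its own. These are edges inside $L_{i-1},L_i,L_{i+1},L_{i+2}$ or between consecutive pairs among these layers. Therefore every crossing pair of edges involves endpoints drawn from at most three consecutive layers.

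For style 2, we take the DP state at stage $i$ to be the pair $(\pi_{i-1},\pi_i)$ of orderings of $L_{i-1}$ and $L_i$; there are at most $(w!)^2$ such states. We define $D_i(\pi_{i-1},\pi_i)$ as the minimum number of crossings among all edges with both endpoints in $L_1\cup\dots\cup L_i$, over all orderings of $L_1,\dots,L_{i-2}$ consistent with the state. The transition to $(\pi_i,\pi_{i+1})$ minimizes over $\pi_{i-1}$. It adds the crossings between a new edge (one touching $L_{i+1}$) and any edge of $L_{i-1}\cup L_i\cup L_{i+1}$, counting each pair once.

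By the locality observation, these new crossings depend only on $\pi_{i-1},\pi_i,\pi_{i+1}$. We need to check carefully that no new edge can interleave with an edge lying entirely in $L_1\cup\dots\cup L_{i-2}$. This holds because any edge touching $L_{i+1}$ has left endpoint in $L_i$, which lies to the right of all such positions.

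There are $(w!)^3$ triples in total. This is more than the claimed bound, so we must reorganize to reach $(w!)^2$. The fix is to count incrementally as each layer is placed. When $\pi_{i+1}$ is fixed, a crossing involving edges up to layer $i+1$ is determined once all four endpoints are placed. Each such crossing has its rightmost endpoint in $L_{i+1}$ or earlier, and the other relevant endpoints lie in $L_i\cup L_{i+1}$, except for an edge from $L_{i-1}$ to $L_i$. We handle that case by charging crossings at the moment the later edge's right endpoint is placed. We also store in the state, for each vertex of $L_i$, the ordered list of its left neighbors in $L_{i-1}$ as positions relative to $\pi_i$. Since $\pi_{i-1}$ is fixed in the state, these are already available, and the triple enumeration reduces to computing the transition cost in $O(w^4)$ per pair of states.

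The cost of a transition is a sum over pairs of at most $O(w^2)$ edges each, giving $O(w^4)$. With $H\le n$ stages, this gives $O(w^4(w!)^2n)$. I expect the main obstacle to be organizing the state and the charging so that only two consecutive orderings are ever needed. The fallback is to show that the $\pi_{i-1}$-dependent part of each transition factors as a table precomputed per state, which makes the minimization over $\pi_{i-1}$ cost only a factor already absorbed in $D_i$.

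For style 4, we extend the state with a page assignment (above or below) for every edge lying within $L_{i-1}\cup L_i$ or between these two layers. Edges added at a transition are those within $L_{i+1}$ and between $L_i$ and $L_{i+1}$, at most $w^2+\binom{w}{2}\le 3w^2/2$ of them. A single stage's state covers at most $3w^2$ edge slots, giving at most $2^{3w^2}$ page assignments per ordering pair. Crossings are counted only between edges on the same page, again using \cref{lemma:cross} separately per page. Repeating the analysis with this larger state space yields $O(w^4(w!)^22^{3w^2}n)$.

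Correctness in both cases follows by induction on $i$. Every crossing is counted exactly once, at the transition that places the last of its four endpoints. A backtracking pass over the DP table then recovers an optimal ordering and page assignment.
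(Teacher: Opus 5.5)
There is a genuine gap: the proposal never establishes the claimed running times. Your DP over pair states $(\pi_{i-1},\pi_i)$, with transitions that minimize over $\pi_{i-1}$, enumerates $(w!)^3$ triples per layer, as you note. The proposed repair does not reduce this. Storing left-neighbor lists in the state does not shrink the set of compatible state/successor pairs, which with pair states still ranges over all triples $(\pi_{i-1},\pi_i,\pi_{i+1})$.

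The missing idea is that the cost of adding $L_{i+1}$ does not depend on $\pi_{i-1}$ at all. An edge touching $L_{i+1}$ can interleave with an earlier edge only if the new edge joins $L_i$ to $L_{i+1}$ and the earlier edge either lies inside $L_i$ or joins $L_{i-1}$ to $L_i$. The first case is decided by $\pi_i$. In the second case, take $(u',v')$ with $u'\in L_{i-1}$, $v'\in L_i$ and $(u,v)$ with $u\in L_i$, $v\in L_{i+1}$. Layer-respect already forces $\sigma(u')<\sigma(u)$ and $\sigma(v')<\sigma(v)$, so the two edges cross iff $\sigma(u)<\sigma(v')$, which is decided by $\pi_i$ alone. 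You assert the opposite (``except for an edge from $L_{i-1}$ to $L_i$''), and that is what pushes you into pair states.

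Once this is observed, the crossing number is a sum of terms each depending on two consecutive permutations. A single permutation per layer is then a sufficient state, which is exactly the paper's recurrence $c(s_i)=\min_{s_{i-1}}(c(s_{i-1})+c(s_{i-1},s_i))$. It has $(w!)^2$ transitions per layer at $O(w^4)$ each. Your ``fallback'' would go through, but only because the $\pi_{i-1}$-dependent part is empty, which you did not show.

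The same issue is amplified in style 4. Your state stores an ordering pair plus pages for all edges inside $L_{i-1}$, inside $L_i$, and between them, so the compatible state/successor pairs again carry a factor $(w!)^3$ times a larger power of two. ``Repeating the analysis'' therefore does not give $O(w^4(w!)^2 2^{3w^2}n)$.

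The paper attaches to $\pi_i$ only the pages of the edges introduced by $L_i$, namely those inside $L_i$ and those between $L_{i-1}$ and $L_i$. That is at most $\binom{w}{2}+w^2\le \tfrac32 w^2$ bits, so $S_w\le w!\,2^{3w^2/2}$ and $S_w^2\le (w!)^2 2^{3w^2}$. The page of each $L_{i-1}$--$L_i$ edge, needed to test it against $L_i$--$L_{i+1}$ edges, then lives in $s_i$ and is available in the transition $(s_i,s_{i+1})$.
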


\subsection{Layered Drawings}
Similarly, crossings in a layered drawing can be determined via the positions of vertices. 
Let the $x$-coordinates of the vertices be specified by a function $\tau : V \to \mathbb{R}$. This is no longer a bijection because distinct vertices in different layers can be vertically aligned.
For intralayer crossings, the crossing condition is the same as in Lemma \ref{lemma:cross} but with $\tau$ replacing $\sigma$. For any interlayer edge $(u, v)$, we will enforce that $u$ is the vertex at the top and $v$ is at the bottom.
We have the following condition for interlayer crossings:
\begin{lemma}[Interlayer crossings]\label{lemma:inter}
    Two interlayer edges $e = (u, v)$ and $e' = (u', v')$ between layers $i$ and $i-1$ can be drawn as line segments without crossings if and only if
\(
    (\tau(u) - \tau(u') ) (\tau(v) - \tau(v')) \ge 0
\).
\end{lemma}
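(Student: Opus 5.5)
We prove \cref{lemma:inter} by reducing it to elementary geometry of segments between two parallel horizontal lines. Place layer $i-1$ on the line $y=y_{i-1}$ and layer $i$ on $y=y_i$, with $y_{i-1}>y_i$, so that $u,u'$ lie on the upper line and $v,v'$ on the lower line. Each interlayer edge is a straight segment, so $e$ is the set of points $(1-t)(\tau(u),y_{i-1})+t(\tau(v),y_i)$ for $t\in[0,1]$, and similarly for $e'$ with parameter $s$. Since $y_{i-1}\neq y_i$, a common point of the two segments must have equal $y$-coordinates, which forces $s=t$. The segments therefore meet exactly when the function
\[
f(t)=\bigl((1-t)\tau(u)+t\tau(v)\bigr)-\bigl((1-t)\tau(u')+t\tau(v')\bigr)
\]
has a zero in $[0,1]$.

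The key step is that $f$ is affine in $t$, with $f(0)=\tau(u)-\tau(u')$ and $f(1)=\tau(v)-\tau(v')$. If $f(0)f(1)<0$, then the intermediate value theorem gives a zero strictly inside $(0,1)$. This zero is an interior crossing point, since it is an endpoint of neither edge. So whenever the product is negative, the two edges necessarily cross, and since the drawing is fixed to straight segments, this crossing cannot be avoided. This proves the ``only if'' direction.

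For the converse, assume $f(0)f(1)\ge 0$. If both values are nonzero, they have the same strict sign, and since $f$ is affine it never vanishes on $[0,1]$; the segments are disjoint. If $f(0)=0$ but $f(1)\neq 0$, the segments meet only at $t=0$, which means $u=u'$ or the two edges share a top endpoint at the same position. In a layered drawing distinct vertices in the same layer have distinct $\tau$, so $u=u'$, and this shared endpoint does not count as a crossing, by the convention stated in the paragraph on unnecessary crossings. The case $f(1)=0$ is symmetric. If both values are zero, then $u=u'$ and $v=v'$, and the edges coincide as the same edge. So they are not a pair of distinct crossing edges.

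The main obstacle is the boundary case $f=0$ at an endpoint. I will handle it using injectivity of $\tau$ within a layer and the convention that shared endpoints are not crossings. Everything else is the intermediate value argument for an affine function.
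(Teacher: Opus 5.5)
Your proof is correct and complete. Note that the paper gives no proof of this lemma: it is stated as an elementary geometric fact and then used in the proof of Theorem~\ref{thm:layered}. Your argument therefore supplies the missing justification rather than competing with an existing one.

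The parametrization is the right tool. Equal heights force $s=t$, and the horizontal offset between the two segments is then $f(t)=(1-t)f(0)+t\,f(1)$. The intermediate value theorem handles the negative case, and the convex-combination form handles the case of equal strict signs.

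A by-product worth stating explicitly: the criterion depends neither on the vertical gap $y_{i-1}-y_i$ nor on the actual values of $\tau$. It depends only on the relative order of $\tau$ within each of the two layers. This is exactly what the proof of Theorem~\ref{thm:layered} needs. That proof enlarges the layer gaps to meet the 45-degree and arc-separation constraints, and then tracks only within-layer permutations. Your argument shows that doing so cannot change any interlayer crossing.

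Two small remarks on your boundary cases.
\begin{itemize}
\item Injectivity of $\tau$ within a layer is only implicit in the paper. The paper merely says $\tau$ fails to be a bijection because of vertices in \textit{different} layers. Injectivity does follow, since distinct vertices occupy distinct points and vertices of one layer share a $y$-coordinate; you may want to say so.
\item The paper's remark about shared endpoints is phrased for a segment and an arc. For two segments meeting at a common vertex, you are really invoking the standard graph-drawing convention that adjacent edges do not cross at their shared endpoint. It is cleaner to cite that convention directly.
\end{itemize}
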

For an interlayer edge  $e = (u, v)$  drawn between layers $i$ and $i-1$, and an intralayer edge $e' = (u', v')$ drawn above layer $i$ (or below layer $i-1$), we say that $e$ and $e'$ \emph{intersect} if $v$ (or $u$) lies in the open line segment between $u'$ and $v'$.
\begin{lemma}[Mixed crossings]\label{lemma:mixed}
For an interlayer edge  $e$  drawn between layers $i$ and $i-1$, and an intralayer edge $e'$ drawn above layer $i$ or below layer $i-1$,  edges $e$ and $e'$ can be drawn without crossings if and only if they do not intersect.
\end{lemma}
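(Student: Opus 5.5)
We prove the two directions of Lemma~\ref{lemma:mixed} separately, using the drawing constraints for layered drawings. Fix the case where $e'=(u',v')$ is an intralayer edge drawn as a quarter-circle arc above layer $i$, and $e=(u,v)$ is an interlayer edge. By our convention, $u$ is the top endpoint and $v$ the bottom endpoint, so here $u$ lies on layer $i-1$ and $v$ lies on layer $i$. The case of an arc below layer $i-1$ is symmetric: reflect the picture vertically and swap the roles of $u$ and $v$.

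\textbf{Step 1: if the edges intersect, they cross.} Suppose $v$ lies strictly between $u'$ and $v'$ on the line of layer $i$. The arc $e'$ together with the open segment of that line between $u'$ and $v'$ bounds a region $R$ lying above layer $i$. The edge $e$ is a straight segment leaving $v$ upward toward $u$ in layer $i-1$, so it enters the interior of $R$ immediately. Layer $i-1$ lies above the apex of every arc above layer $i$. This is exactly the spacing condition used to enforce constraint~3, which requires a vertical gap of at least $(\sqrt2-1)$ times the layer extent; that bound dominates the quarter-circle heights. Hence $u$ lies outside $R$. Since $e$ cannot exit $R$ through the horizontal line segment except at $v$ itself, the Jordan curve theorem forces $e$ to cross the arc $e'$. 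Any homeomorphic redrawing that keeps vertices on their lines and edges in their prescribed pages has the same separation property, so the crossing is unavoidable.

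\textbf{Step 2: if they do not intersect, the chosen geometry has no crossing.} Now suppose $v$ is not in the open interval. There are two subcases.
\begin{enumerate}
\item If $v$ lies strictly outside the closed interval, then $e$ starts outside $R$. The main work is to check that $e$ never meets $e'$. The arc is a quarter circle, which is tangent at $45^\circ$ and therefore bulges by at most a $45^\circ$ slope at its endpoints. Constraint~2 says $e$ is steeper than $45^\circ$. So a line from $v$ going outward-upward stays outside the region swept by the arc. Concretely, we compare the slope of $e$ with the slope of the arc's supporting lines at $u'$ and $v'$.
\item If $v\in\{u',v'\}$, the edges share an endpoint, which by convention is not a crossing. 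The same slope comparison shows the segment leaves $v$ outside the arc's tangent cone, so the two curves do not meet a second time.
\end{enumerate}

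We expect the main obstacle to be this slope comparison in Step~2. It needs a careful but elementary argument that a segment of slope at least $45^\circ$ from an endpoint, or from a point outside the chord, is disjoint from a quarter-circle arc over that chord. The same applies across the whole vertical range up to layer $i-1$, which is handled by the spacing bound.
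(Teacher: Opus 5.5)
Your argument is correct and is essentially the reasoning the paper relies on. The paper states this lemma without a separate proof and justifies it only through the constraints in its ``Unnecessary Crossings'' paragraph: quarter-circle arcs with $45^\circ$ end tangents, interlayer segments at least $45^\circ$ steep, and sufficient vertical spacing. Your Jordan-separation argument for the forced crossing, together with the tangent-line comparison for the avoidable case, makes that reasoning explicit.

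The slope comparison you flag as the main obstacle closes in one line. Put layer $i$ on $y=0$ with $u'=(a,0)$ and $v'=(b,0)$, $a<b$. The disk bounded by the circle carrying $e'$ lies in the half-plane $x+y\le b$ and meets the line $x+y=b$ only at $v'$. Because $e$ rises at least as much as it moves horizontally, every point of $e$ satisfies $x+y\ge x_v$. So when $x_v\ge b$, the two edges share at most the point $v'$, and only if $v=v'$; the case $x_v\le a$ is symmetric via the tangent at $u'$. The spacing bound is therefore not needed in Step~2.
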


\begin{theorem}\label{thm:layered}
    If we have a graph with a layering of width $w$, then we can find a layered drawing with a minimum number of crossings for the given layering in time:
\begin{enumerate}
    \item $T_w \in O(w^4 (w!)^2 n)$ using style 1;
    \item $T_w \in O(w^4 (w!)^2 2^{3 w^2}n)$ using style 3.
\end{enumerate}
\end{theorem}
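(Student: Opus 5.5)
We prove \cref{thm:layered} by a dynamic program over the layers $L_1,\dots,L_H$, processed top to bottom. The \emph{state} of layer $i$ is the left-to-right order of its vertices (at most $w!$ choices) and, in style 3, also a side assignment for its intralayer edges: above or below the horizontal line of layer $i$. There are at most $\binom{w}{2}\le w^2/2$ such edges, so at most $2^{w^2/2}$ assignments.

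The first step is to show that, under the drawing constraints of \cref{paragraph:crossings}, the crossing number of a layered drawing splits into local terms. Each term depends only on the states of one layer or of two consecutive layers.
\begin{itemize}
\item \emph{Intralayer crossings.} Constraint 3 forbids crossings between intralayer arcs of different layers. So every intralayer crossing is between two arcs of the same layer on the same side of its line. By \cref{lemma:cross} (with $\tau$ in place of $\sigma$) it is determined by the state of that layer alone.
\item \emph{Interlayer crossings.} Every interlayer edge joins consecutive layers, so two such edges can cross only if both join $L_{i-1}$ and $L_i$. By \cref{lemma:inter} such a crossing depends only on the relative orders within $L_{i-1}$ and $L_i$. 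Exact $x$-coordinates do not matter, so it suffices to fix the orders.
\item \emph{Mixed crossings.} Constraints 2 and 3 keep an interlayer edge between $L_{i-1}$ and $L_i$ inside the strip between those lines. It can therefore meet only arcs drawn below layer $i-1$ or above layer $i$. By \cref{lemma:mixed} this is decided by the two orders and the side assignments of the two layers.
\end{itemize}
Hence the total crossing number equals $\sum_i c(s_i)+\sum_i c'(s_{i-1},s_i)$, where $s_i$ is the state of layer $i$.

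The second step is the recurrence. Set $D_1(s)=c(s)$ and
\[
D_i(s)=c(s)+\min_{s'}\bigl(D_{i-1}(s')+c'(s',s)\bigr).
\]
The optimum is $\min_s D_H(s)$, and the drawing itself is recovered by storing back-pointers. Correctness follows from the decomposition and the fact that every choice of states is realizable. For realizability, place the vertices of each layer at consecutive integer $x$-coordinates in the chosen order. Then space the layers vertically as in \cref{paragraph:crossings}, so that constraints 2 and 3 hold.

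The third step is the running time. The number of state pairs is at most $(w!)^2$ in style 1 and $(w!)^2\,2^{w^2}$ in style 3. For each pair we evaluate $c'$ by brute force:
\begin{itemize}
\item interlayer pairs: at most $(w^2)^2=w^4$ pairs of edges between the two layers;
\item mixed pairs: $O(w^2\cdot w^2)$ pairs;
\item intralayer pairs: $O(w^4)$ pairs.
\end{itemize}
Each check takes $O(1)$ time by the lemmas, so a pair costs $O(w^4)$. Summed over $H\le n$ layers, this gives $O(w^4(w!)^2 n)$ for style 1 and $O(w^4(w!)^2 2^{w^2} n)\subseteq O(w^4(w!)^2 2^{3w^2} n)$ for style 3.

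The main obstacle is the decomposition step, specifically ruling out non-local crossings. We must argue that constraints 2 and 3 together confine each edge to its own layer line or its own strip between consecutive lines. The argument is that the 45-degree and quarter-circle conditions, combined with the chosen vertical spacing, put arcs above line $i$ strictly below line $i+1$'s arcs. Consequently no arc or segment reaches beyond the adjacent strip, and no pair of edges meets twice.
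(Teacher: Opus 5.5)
Your proof is correct and follows essentially the same route as the paper, which reuses the layer-by-layer dynamic program from the proof of \cref{thm:arc}, with constant-time crossing tests from \cref{lemma:cross,lemma:inter,lemma:mixed} and realizability obtained by fixing the relative orders within layers and spacing the layers far apart. The only difference is that you omit the paper's superfluous side bits for interlayer edges, which are straight segments in a layered drawing; this makes your style-3 count of $(w!)^2 2^{w^2}$ state pairs tighter than the paper's $(w!)^2 2^{3w^2}$, and it still gives the stated bound.
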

% \begin{proof}
% The proof is similar to the proof of Theorem \ref{thm:arc}.
% The minimum and maximum $x$-coordinates of the vertices can be chosen first. With these values fixed, we set the distances between layers to be sufficiently large. After this point, the exact values of the  $x$-coordinates no longer affect the crossings, and we only need to keep track of the relative positions. Thus, by Lemma \ref{lemma:inter} and Lemma \ref{lemma:mixed}  the states in  the proof of Theorem \ref{thm:arc} also fully determine the number of crossings in styles 1 and 3, and it takes constant time to check if a pair of edges cross each other. The rest follows.
% \end{proof}

\subsection{Linear Cylindric Drawings}

\begin{theorem} \label{thm:cylindric}
    If we have a graph with a layering of width $w$, then we can find a layer-respecting linear cylindric drawing with a minimum number of crossings for the given layering in time $T_w \in O(w^4 (w!)^2 4^{3 w^2}n)$ using style 5.
\end{theorem}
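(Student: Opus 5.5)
We reuse the layer-by-layer dynamic program behind Theorem~\ref{thm:arc}, changing only the per-layer state to record which of the four available curve types each edge uses. Since the layering has width $w$ and every edge has span at most one, all crossings are between edges incident to layers $i-1$ and $i$ (together with intralayer edges of those layers). In a layer-respecting unrolled cylinder, each edge is drawn in one of four ways: as a proper arc above or below the vertex line, or as a curve wrapping once around the cylinder. The wrapping curve can go in either of two directions; in the unrolled picture it leaves the top copy of one endpoint and reaches the bottom copy of the other, or vice versa.

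\textbf{State.} The state for layer $i$ is a pair consisting of:
\begin{itemize}
\item a permutation $\pi_i$ of $L_i$ (at most $w!$ choices);
\item an assignment of one of the $4$ drawing types to every edge with both endpoints in $L_i$, and to every edge between $L_{i-1}$ and $L_i$.
\end{itemize}
There are at most $\binom{w}{2}+w^2 \le \tfrac{3}{2}w^2$ such edges. Encoding each type with two bits, per unordered pair of positions, bounds the type assignments by $4^{3w^2}$; this is a loose count, matching the $2^{3w^2}$ bound used for the two sides in style 4.

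\textbf{Transition.} The DP value $D_i(\pi_{i-1},\pi_i,\text{types})$ is the minimum number of crossings among edges incident to $L_1\cup\dots\cup L_i$, given the last two permutations and the types of the edges touching $L_i$. The transition to layer $i+1$ enumerates $\pi_{i+1}$ and the types of the new edges. It adds the crossings among edges within $L_i\cup L_{i+1}$ in which at least one edge is new, counting each crossing only once.

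\textbf{Crossing test.} We must justify that whether two edges cross, or the number of times they cross, is determined by the positions and types of their endpoints. For two proper arcs on the same side, Lemma~\ref{lemma:cross} gives the criterion, and arcs on opposite sides never cross. For a wrapping edge against a proper arc, in the unrolled drawing the wrapping edge is a monotone curve from one copy of the line to the other. It therefore crosses a proper arc on the top (resp. bottom) side exactly when exactly one endpoint of that arc lies strictly between the wrapping edge's endpoints. For two wrapping edges, they cross iff their endpoint orders are inconsistent, in the sense of Lemma~\ref{lemma:inter} applied to the top and bottom copies. This reasoning follows \cite{auer_plane_2011}, and the count is preserved under unrolling.

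Each pair test takes $O(1)$ time. There are $O(w^4)$ pairs of edges incident to two consecutive layers, since each of these $O(w^2)$ edge sets contributes to a quadratic number of pairs. Layers whose edges are disjoint from the current window never interact with the new edges, so restricting attention to consecutive layers loses no crossings.

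\textbf{Running time.} A naive count of the pairs of consecutive states gives $(w!)^2 4^{O(w^2)}$ per layer. The subtle point is to arrange the state so that the transition enumerates only $\pi_{i+1}$ and the new edge types. The table is then indexed by (permutation of the current layer, types of the edges to the previous layer) and minimized over the previous permutation. This yields $O((w!)^2 4^{3w^2})$ transitions, each costing $O(w^4)$ crossing checks. Summing over $H\le n$ layers gives $O(w^4(w!)^2 4^{3w^2} n)$. Standard backtracking recovers the optimal drawing.

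\textbf{Main obstacle.} I expect the crossing characterization for wrapping edges to be the hardest part. One must show that, for layer-respecting orders, drawing wrapping curves with minimum mutual intersections gives a count depending only on endpoint order and the two directions. I would prove this by unrolling into a strip in which each wrapping curve is a segment from the top copy to the bottom copy. Within that strip, Lemma~\ref{lemma:inter} and Lemma~\ref{lemma:mixed} apply verbatim.
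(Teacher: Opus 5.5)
Your overall plan matches the paper's: a layer-by-layer DP whose state is a permutation of $L_i$ plus one of four drawing types for each edge introduced at layer $i$, with crossings decided in the unrolled two-line picture. This gives $(w!)^2 4^{3w^2}$ state pairs per layer at $O(w^4)$ each.

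The step that fails is your explicit test for a wrapping edge against a proper arc. Let the wrapping edge $e$ run from the top copy of $u$ to the bottom copy of $v$, and let $e'=(a,b)$ be a proper arc. You test whether exactly one of $\sigma(a),\sigma(b)$ lies strictly between $\sigma(u)$ and $\sigma(v)$. That is an interleaving test, and it ignores both which copy $e'$ is attached to and the direction of $e$, so it cannot be right. The correct condition is Lemma~\ref{lemma:mixed} applied to the two copies: $\sigma(a)<\sigma(u)<\sigma(b)$ if $e'$ hangs from the top copy, and $\sigma(a)<\sigma(v)<\sigma(b)$ if $e'$ stands on the bottom copy.

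Take $a,b\in L_1$ and $c,d\in L_2$ at positions $1,2,3,4$.
\begin{itemize}
\item Let $(a,c)$ be an arc on the top copy and let $(d,b)$ wrap from the top copy of $d$ to the bottom copy of $b$. Your test reports a crossing. But the wrapping curve can descend while $x\in(3,4)$ and then run left close to the bottom copy, missing the arc entirely.
\item Let $(a,d)$ be an arc on the top copy and let $(b,c)$ wrap from the top copy of $b$ to the bottom copy of $c$. Your test reports no crossing. Yet the curve leaves $b$ into the region bounded by the arc and the segment $[1,4]$ of the top copy. Since it may not cross the vertex line, it must cross the arc to reach the bottom copy.
\end{itemize}
So the DP as you specify it optimizes a quantity that both over- and under-counts crossings, and its optimum need not be crossing-minimal. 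Your closing remark that Lemma~\ref{lemma:mixed} applies verbatim in the strip is correct. It is exactly the paper's justification, which views style 5 as a layered drawing with two large layers. But it contradicts the criterion you actually wrote, and that criterion must be replaced by the lemma.

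Two smaller points.
\begin{itemize}
\item Your transition adds crossings ``among edges within $L_i\cup L_{i+1}$''. It must also count pairs made of a new edge and an old edge between $L_{i-1}$ and $L_i$. For example, same-side arcs $(x,y)$ and $(y',z)$ with $x\in L_{i-1}$, $y,y'\in L_i$, $z\in L_{i+1}$ cross iff $\sigma(y')<\sigma(y)$. Such pairs are determined by $\pi_i$ and the stored types alone, because all of $L_{i-1}$ precedes $L_i$. That is why $\pi_{i-1}$ can be dropped from the table, and it is the $n_w^2$ term in the paper's $C_w$.
\item The number of type assignments per state is at most $4^{\binom{w}{2}+w^2}\le 4^{\frac{3}{2}w^2}$. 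The exponent $3w^2$ appears only after squaring for pairs of consecutive states. Bounding a single state by $4^{3w^2}$, as you wrote, would give $4^{6w^2}$ rather than the claimed bound.
\end{itemize}
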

% \begin{proof}
%     In style 5, recall that the vertices are duplicated, and the top and bottom lines contain the same vertices in the same order (see Figure \ref{fig:5-styles}). Therefore, once we choose the permutation of vertices for a given layer, each edge can be drawn in one of four possible ways: an arc below the top line or above the bottom line, or a straight line segment going up or down.
% These states fully determine the number of crossings.
% Thus, this case is similar to the proof for style 4 in Theorem \ref{thm:arc}.
% \end{proof}

\subsection{BFS Layering}
We now consider the case where no layering with small width is given for the input graph but the graph has BFS width $b$. In this case, we can find a layering by picking a starting vertex and performing BFS. By definition, for any choice of the starting vertex, we have width $w \le b $. Thus, we can apply our previous results to minimize the number of crossings over all possible starting vertices. More precisely, we have the following:
%\mike{Please give the complete proofs of the following corollaries even if they are easy.}
%\mike{Can't we take the minimum choice here and optimize the value of $b$?}
%\alfred{after the generalization i'm not sure if we can find the drawing with minimum crossings of a graph over all its possible layerings. we can enumerate all the BFS layerings to find one that leads to a minimum number of crossings among those. But it's possible that the drawing with the minimum number of crossings has more than one vertices in the first layer.}
\begin{theorem}   \label{cor:time}
        If we have a graph with BFS width $b$, then among its BFS layer-respecting drawings, we can find one with the minimum number of crossings in time:
\begin{enumerate}
    \item $T^* \in O(b^4 (b!)^2 n^2)$ using style 1 or 2;
    \item $T^* \in O(b^4 (b!)^2 2^{3 b^2}n^2)$ using style 3 or 4;
    \item $T^* \in O(b^4 (b!)^2 4^{3 b^2}n^2)$ using style 5.
\end{enumerate}
\end{theorem}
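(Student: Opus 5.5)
The plan is to enumerate all $n$ possible BFS starting vertices. For each one we compute its layering and call the algorithms of \cref{thm:arc}, \cref{thm:layered}, or \cref{thm:cylindric}, according to the style. We then return the drawing with the fewest crossings over all starting vertices.

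Concretely, fix a starting vertex $v\in V$. By \cref{lemma:bfs}, a BFS from $v$ takes $O(b\cdot n)$ time, and it yields the BFS layering $\{L_1,\ldots,L_H\}$ in which $L_i$ is the set of vertices at distance $i-1$ from $v$. As noted in the introduction, every BFS layering is a graph layering: each edge has span at most one by the reverse triangle inequality. By the definition of $\bfsw(G)$ as a maximum over all starting vertices, the width $w_v$ of this layering satisfies $w_v\le b$. Hence the theorems for a given layering apply directly. For each $v$ they return, in time $T_{w_v}$, a drawing that is optimal among the layer-respecting drawings for that particular layering, in the chosen style.

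The family of BFS layer-respecting drawings is exactly the union, over $v\in V$, of the layer-respecting drawings for the BFS layering rooted at $v$. Its minimum is therefore the minimum of the $n$ per-root optima, and keeping a running minimum over the roots gives the correct answer.

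For the running time, each bound $T_w$ in \cref{thm:arc}, \cref{thm:layered}, and \cref{thm:cylindric} is monotone nondecreasing in $w$, since $w^4$, $w!$, $2^{3w^2}$, and $4^{3w^2}$ all are. So $T_{w_v}\le T_b$. The BFS cost $O(bn)$ is dominated by $T_b$, because $b\le b^4(b!)^2$. Summing over the $n$ roots gives $n\cdot O(T_b)$, which yields the three claimed bounds: $O(b^4(b!)^2n^2)$ for styles 1 and 2, $O(b^4(b!)^2 2^{3b^2}n^2)$ for styles 3 and 4, and $O(b^4(b!)^24^{3b^2}n^2)$ for style 5.

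The only potentially delicate point is the meaning of ``among its BFS layer-respecting drawings'' in the statement. It must be read as the union over all BFS layerings, not as a claim of global optimality over arbitrary layerings. I would state this explicitly in the proof. Everything else is a direct invocation of the earlier theorems.
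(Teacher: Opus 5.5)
Your proposal is correct and matches the paper's proof: for each of the $n$ starting vertices, run BFS in $O(bn)$ time, apply Theorems~\ref{thm:arc}, \ref{thm:layered}, or \ref{thm:cylindric} to the resulting layering of width at most $b$, and keep the best of the $n$ drawings. Two of your remarks are tighter than the paper's informal comment that the BFS term is small ``as long as $b$ is not close to 1'': that $T_w$ is monotone in $w$, and that the BFS cost is always dominated because $b \le b^4(b!)^2$.
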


% \begin{proof}
%     For an arbitrary starting vertex $v$, we can compute its BFS layering which will have width at most $b$. By Lemma \ref{lemma:bfs}, the time complexity of BFS is $O(b n)$. 
%  This term will be small compared to the rest of the terms as long as $b$ is not close to 1.
%     \begin{enumerate}
%         \item By Theorems \ref{thm:arc} and \ref{thm:layered}, we can find a drawing in styles 1 or 2 rooted at $v$ with the minimum number of crossings in time $O(b^4 (b!)^2 n)$. We then take the drawing with the minimum number of crossings over all vertices $v$ of the graph, taking $O(b^4 (b!)^2 n^2)$ time total.
%         \item Similarly, by Theorems \ref{thm:arc} and \ref{thm:layered}, we can find a drawing in styles 3 or 4 rooted at $v$ with the minimum number of crossings in time $O(b^4 (b!)^2 2^{3 b^2}n)$. Taking the minimum over all vertices will then take  $O(b^4 (b!)^2 2^{3 b^2}n^2)$ time total.
%         \item Once again, by Theorem \ref{thm:cylindric}, we can find a drawing in style 5 rooted at $v$ with the minimum number of crossings in time $ O(b^4 (b!)^2 4^{3 b^2}n)$. Taking the minimum over all vertices will then take $ O(b^4 (b!)^2 4^{3 b^2}n^2)$ time total.
%     \end{enumerate}
% \end{proof}

\section{Conclusion}
We have presented a number of ways to achieve crossing minimization in fixed-parameter linear time for layer-respecting arc diagrams and layer-respecting linear cylindric drawings, where the parameter is the width of a layered graph.
There are a number of interesting open problems that remain.
For example, one possible direction for future work could be wrap-around linear drawings, such as on a torus rather than a cylinder. Another might be to go beyond the linear cylindric drawing paradigm by allowing edges that (while still drawn as monotonic curves) cross through the line of the vertices, perhaps counting an additional crossing whenever this happens.

% \ifFull
% \subsection*{Acknowledgment}
% We would like to thank David Eppstein for several helpful discussions that led to the results of this paper.
% \fi
\ifAnon
\else
\section*{Acknowledgements}
This research was supported in part by NSF grant CCF-2212129.
\fi
\small
\bibliographystyle{abbrv}
\bibliography{references, zotero-no-url}

@InProceedings{eppstein_et_al:LIPIcs.ESA,
  author =	{Eppstein, David and Goodrich, Michael T. and Liu, Songyu (Alfred)},
  title =	{Bandwidth vs {BFS} Width in Matrix Reordering, Graph Reconstruction, and Graph Drawing},
  booktitle =	{33rd Annual European Symposium on Algorithms (ESA 2025)},
  pages =	{69:1--69:17},
  series =	{Leibniz International Proceedings in Informatics (LIPIcs)},
  ISBN =	{978-3-95977-395-9},
  ISSN =	{1868-8969},
  year =	{2025},
  volume =	{351},
  editor =	{Benoit, Anne and Kaplan, Haim and Wild, Sebastian and Herman, Grzegorz},
  publisher =	{Schloss Dagstuhl -- Leibniz-Zentrum f{\"u}r Informatik},

  URN =		{urn:nbn:de:0030-drops-245373},
  doi =		{10.4230/LIPIcs.ESA.2025.69}
}

@article{burch2021dynamic,
  title={Dynamic Graph Exploration by Interactively Linked Node-Link Diagrams and Matrix Visualizations},
  author={Burch, Michael and Ten Brinke, Kiet Bennema and Castella, Adrien and Peters, Ghassen Karray Sebastiaan and Shteriyanov, Vasil and Vlasvinkel, Rinse},
  journal={Visual Computing for Industry, Biomedicine, and Art},
  volume={4},
  pages={1--14},
  year={2021},
  doi={10.1186/s42492-021-00088-8}
}

@INPROCEEDINGS{debiasi,
  author={Debiasi, Alberto and Simoes, Bruno and De Amicis, Raffaele},
  booktitle={International Conference on Cyberworlds (CW)}, 
  title={Schematization of Node-Link Diagrams and Drawing Techniques for Geo-referenced Networks}, 
  year={2015},
  pages={34-41},
  doi={10.1109/CW.2015.68}}

@article{vlsi,
author = {Chung, Fan R. K. and Leighton, Frank Thomson and Rosenberg, Arnold L.},
title = {Embedding Graphs in Books: A Layout Problem with Applications to {VLSI} Design},
journal = {SIAM Journal on Algebraic Discrete Methods},
volume = {8},
number = {1},
pages = {33-58},
year = {1987},
doi = {10.1137/0608002},
URL = {https://doi.org/10.1137/0608002},
}

@ARTICLE{sugiyama,
  author={Sugiyama, Kozo and Tagawa, Shojiro and Toda, Mitsuhiko},
  journal={IEEE Transactions on Systems, Man, and Cybernetics}, 
  title={Methods for Visual Understanding of Hierarchical System Structures}, 
  year={1981},
  volume={11},
  number={2},
  pages={109-125},
  doi={10.1109/TSMC.1981.4308636}}

@article{lipton1979separator,
  title={A Separator Theorem for Planar Graphs},
  author={Lipton, Richard J and Tarjan, Robert Endre},
  journal={SIAM Journal on Applied Mathematics},
  volume={36},
  number={2},
  pages={177--189},
  year={1979},
  publisher={SIAM}
}

@INPROCEEDINGS{layers,
  author={Dujmovi{\'c}, Vida and Morin, Pat and Wood, David R.},
  booktitle={54th IEEE Symposium on Foundations of Computer Science (FOCS)}, 
  title={Layered Separators for Queue Layouts, {3D} Graph Drawing and Nonrepetitive Coloring}, 
  year={2013},
  pages={280-289},
  doi={10.1109/FOCS.2013.38}}

@article{bannister2019track,
  title={Track Layouts, Layered Path Decompositions, and Leveled Planarity},
  author={Bannister, Michael J and Devanny, William E and Dujmovi{\'c}, Vida and Eppstein, David and Wood, David R},
  journal={Algorithmica},
  volume={81},
  number={4},
  pages={1561--1583},
  year={2019},
  publisher={Springer},
  doi={10.1007/s00453-018-0487-5}
}

@article{eppstein2007confluent,
  title={Confluent Layered Drawings},
  author={Eppstein, David and Goodrich, Michael T and Meng, Jeremy Yu},
  journal={Algorithmica},
  volume={47},
  number={4},
  pages={439--452},
  year={2007},
  publisher={Springer}
}

@inproceedings{agrawal_eliminating_2024,
  title = {Eliminating Crossings in Ordered Graphs},
  booktitle = {19th {{Scandinavian Symposium}} and {{Workshops}} on {{Algorithm Theory}} ({{SWAT}} 2024)},
  author = {Agrawal, Akanksha and Cabello, Sergio and Kaufmann, Michael and Saurabh, Saket and Sharma, Roohani and Uno, Yushi and Wolff, Alexander},
  editor = {Bodlaender, Hans L.},
  year = 2024,
  series = {Leibniz {{International Proceedings}} in {{Informatics}} ({{LIPIcs}})},
  volume = {294},
  pages = {1:1--1:19},
  publisher = {Schloss Dagstuhl -- Leibniz-Zentrum f\"ur Informatik},
  issn = {1868-8969},
  doi = {10.4230/LIPIcs.SWAT.2024.1},
  urldate = {2025-05-09},
  isbn = {978-3-95977-318-8}
}

@inproceedings{auer_plane_2011,
  title = {Plane Drawings of Queue and Deque Graphs},
  booktitle = {Graph {{Drawing}}},
  author = {Auer, Christopher and Bachmaier, Christian and Brandenburg, Franz Josef and Brunner, Wolfgang and Glei{\ss}ner, Andreas},
  editor = {Brandes, Ulrik and Cornelsen, Sabine},
  year = 2011,
  pages = {68--79},
  publisher = {Springer},
  doi = {10.1007/978-3-642-18469-7_7},
  isbn = {978-3-642-18469-7},
  langid = {english}
}

@article{bachmaier_crossing_2010,
  title = {Crossing Minimization in Extended Level Drawings of Graphs},
  author = {Bachmaier, Christian and Buchner, Hedi and Forster, Michael and Hong, Seok-Hee},
  year = 2010,
  journal = {Discrete Applied Mathematics},
  volume = {158},
  number = {3},
  pages = {159--179},
  issn = {0166-218X},
  doi = {10.1016/j.dam.2009.09.002},
  urldate = {2025-05-09}
}

@article{bannister_crossing_2018,
  title = {Crossing {{Minimization}} for 1-Page and 2-Page {{Drawings}} of {{Graphs}} with {{Bounded Treewidth}}},
  author = {Bannister, Michael and Eppstein, David},
  year = 2018,
  journal = {Journal of Graph Algorithms and Applications},
  volume = {22},
  number = {4},
  pages = {577--606},
  issn = {1526-1719},
  doi = {10.7155/jgaa.00479},
  urldate = {2026-04-25},
  langid = {english}
}

@book{battista_graph_1998,
  title = {Graph {{Drawing}}: {{Algorithms}} for the {{Visualization}} of {{Graphs}}},
  shorttitle = {Graph {{Drawing}}},
  author = {Battista, Giuseppe Di and Eades, Peter and Tamassia, Roberto and Tollis, Ioannis G.},
  year = 1998,
  edition = {1st},
  publisher = {Prentice Hall PTR},
  isbn = {978-0-13-301615-4}
}

@misc{bekos_online_2023,
  title = {An Online Framework to Interact and Efficiently Compute Linear Layouts of Graphs},
  author = {Bekos, Michael A. and Haug, Mirco and Kaufmann, Michael and M{\"a}nnecke, Julia},
  year = 2023,
  number = {arXiv:2003.09642},
  eprint = {2003.09642},
  primaryclass = {cs},
  publisher = {arXiv},
  doi = {10.48550/arXiv.2003.09642},
  urldate = {2025-10-05},
  archiveprefix = {arXiv}
}

@article{bernhart_book_1979,
  title = {The Book Thickness of a Graph},
  author = {Bernhart, Frank and Kainen, Paul C},
  year = 1979,
  journal = {Journal of Combinatorial Theory, Series B},
  volume = {27},
  number = {3},
  pages = {320--331},
  issn = {0095-8956},
  doi = {10.1016/0095-8956(79)90021-2},
  urldate = {2025-05-24}
}

@inproceedings{chaplick_monotone_2024,
  title = {Monotone Arc Diagrams with Few Biarcs},
  booktitle = {32nd {{International Symposium}} on {{Graph Drawing}} and {{Network Visualization}} ({{GD}} 2024)},
  author = {Chaplick, Steven and F{\"o}rster, Henry and Hoffmann, Michael and Kaufmann, Michael},
  editor = {Felsner, Stefan and Klein, Karsten},
  year = 2024,
  series = {Leibniz {{International Proceedings}} in {{Informatics}} ({{LIPIcs}})},
  volume = {320},
  pages = {11:1--11:16},
  publisher = {Schloss Dagstuhl -- Leibniz-Zentrum f\"ur Informatik},
  issn = {1868-8969},
  doi = {10.4230/LIPIcs.GD.2024.11},
  urldate = {2025-05-18},
  isbn = {978-3-95977-343-0}
}

@article{dubey_hardness_2011,
  title = {Hardness Results for Approximating the Bandwidth},
  author = {Dubey, Chandan and Feige, Uriel and Unger, Walter},
  year = 2011,
  journal = {Journal of Computer and System Sciences},
  series = {Celebrating {{Karp}}'s {{Kyoto Prize}}},
  volume = {77},
  number = {1},
  pages = {62--90},
  issn = {0022-0000},
  doi = {10.1016/j.jcss.2010.06.006},
  urldate = {2025-03-30}
}

@inproceedings{dunagan_euclidean_2001,
  title = {On Euclidean Embeddings and Bandwidth Minimization},
  booktitle = {Approximation, {{Randomization}}, and {{Combinatorial Optimization}}: {{Algorithms}} and {{Techniques}}},
  author = {Dunagan, John and Vempala, Santosh},
  editor = {Goemans, Michel and Jansen, Klaus and Rolim, Jos{\'e} D. P. and Trevisan, Luca},
  year = 2001,
  pages = {229--240},
  publisher = {Springer},
  doi = {10.1007/3-540-44666-4_26},
  isbn = {978-3-540-44666-8},
  langid = {english}
}

@inproceedings{healy_how_2002,
  title = {How to {{Layer}} a {{Directed Acyclic Graph}}},
  booktitle = {Graph {{Drawing}}},
  author = {Healy, Patrick and Nikolov, Nikola S.},
  editor = {Mutzel, Petra and J{\"u}nger, Michael and Leipert, Sebastian},
  year = 2002,
  pages = {16--30},
  publisher = {Springer},
  doi = {10.1007/3-540-45848-4_2},
  isbn = {978-3-540-45848-7},
  langid = {english}
}

@misc{hegemann_chunky_2026,
  title = {Chunky {{Chains}}: {{Graph Drawings}} on {{Small Screens}}},
  shorttitle = {Chunky {{Chains}}},
  author = {Hegemann, Tim and Jilg, Dominik and Sieper, Marie Diana and Wolf, Samuel},
  year = 2026,
  number = {arXiv:2607.06029},
  eprint = {2607.06029},
  primaryclass = {cs.DS},
  publisher = {arXiv},
  doi = {10.48550/arXiv.2607.06029},
  urldate = {2026-08-15},
  archiveprefix = {arXiv}
}

@inproceedings{klawitter_experimental_2018,
  title = {Experimental Evaluation of Book Drawing Algorithms},
  booktitle = {Graph {{Drawing}} and {{Network Visualization}}},
  author = {Klawitter, Jonathan and Mchedlidze, Tamara and N{\"o}llenburg, Martin},
  editor = {Frati, Fabrizio and Ma, Kwan-Liu},
  year = 2018,
  pages = {224--238},
  publisher = {Springer International Publishing},
  doi = {10.1007/978-3-319-73915-1_19},
  isbn = {978-3-319-73915-1},
  langid = {english}
}

@inproceedings{komarek_network_2015,
  title = {Network {{Visualization Survey}}},
  booktitle = {Computational {{Collective Intelligence}}},
  author = {Komarek, Ales and Pavlik, Jakub and Sobeslav, Vladimir},
  editor = {N{\'u}{\~n}ez, Manuel and Nguyen, Ngoc Thanh and Camacho, David and Trawi{\'n}ski, Bogdan},
  year = 2015,
  pages = {275--284},
  publisher = {Springer International Publishing},
  doi = {10.1007/978-3-319-24306-1_27},
  isbn = {978-3-319-24306-1},
  langid = {english}
}

@article{masuda_crossing_1990,
  title = {Crossing Minimization in Linear Embeddings of Graphs},
  author = {Masuda, S. and Nakajima, K. and Kashiwabara, T. and Fujisawa, T.},
  year = 1990,
  journal = {IEEE Transactions on Computers},
  volume = {39},
  number = {1},
  pages = {124--127},
  issn = {1557-9956},
  doi = {10.1109/12.46286},
  urldate = {2025-05-09}
}

@inproceedings{ruegg_generalization_2016,
  title = {A {{Generalization}} of the {{Directed Graph Layering Problem}}},
  booktitle = {Graph {{Drawing}} and {{Network Visualization}}},
  author = {R{\"u}egg, Ulf and Ehlers, Thorsten and Sp{\"o}nemann, Miro and {von Hanxleden}, Reinhard},
  editor = {Hu, Yifan and N{\"o}llenburg, Martin},
  year = 2016,
  pages = {196--208},
  publisher = {Springer International Publishing},
  doi = {10.1007/978-3-319-50106-2_16},
  isbn = {978-3-319-50106-2},
  langid = {english}
}

@article{shahrokhi_book_1996,
  title = {The Book Crossing Number of a Graph},
  author = {Shahrokhi, Farhad and Sz{\'e}kely, L{\'a}szl{\'o} A. and S{\'y}kora, Ondrej and Vrt'o, Imrich},
  year = 1996,
  journal = {Journal of Graph Theory},
  volume = {21},
  number = {4},
  pages = {413--424},
  issn = {1097-0118},
  doi = {10.1002/(SICI)1097-0118(199604)21:4<413::AID-JGT7>3.0.CO;2-S},
  urldate = {2025-05-18},
  langid = {english}
}

@inproceedings{wattenberg_arc_2002,
  title = {Arc Diagrams: Visualizing Structure in Strings},
  booktitle = {{{IEEE}} Symposium on Information Visualization, 2002. {{INFOVIS}} 2002.},
  author = {Wattenberg, M.},
  year = 2002,
  pages = {110--116},
  doi = {10.1109/INFVIS.2002.1173155}
}

\clearpage
\section*{Appendix}
% \appendix
% \section{Omitted Proofs}
\subsection{Arc Diagrams}

\begin{restate}{thm:arc}
    If we have a graph with a layering of width $w$, then we can find a layer-respecting arc diagram with a minimum number of crossings for the given layering in time:
\begin{enumerate}
    \item $T_w = O(w^4 (w!)^2 n)$ using style 2;
    \item $T_w = O(w^4 (w!)^2 2^{3 w^2}n)$ using style 4.
    
\end{enumerate}
\end{restate}
\begin{proof}

\begin{figure}[hbt]
\vspace*{-12pt}
    \centering
    \includegraphics[width=.9\linewidth]{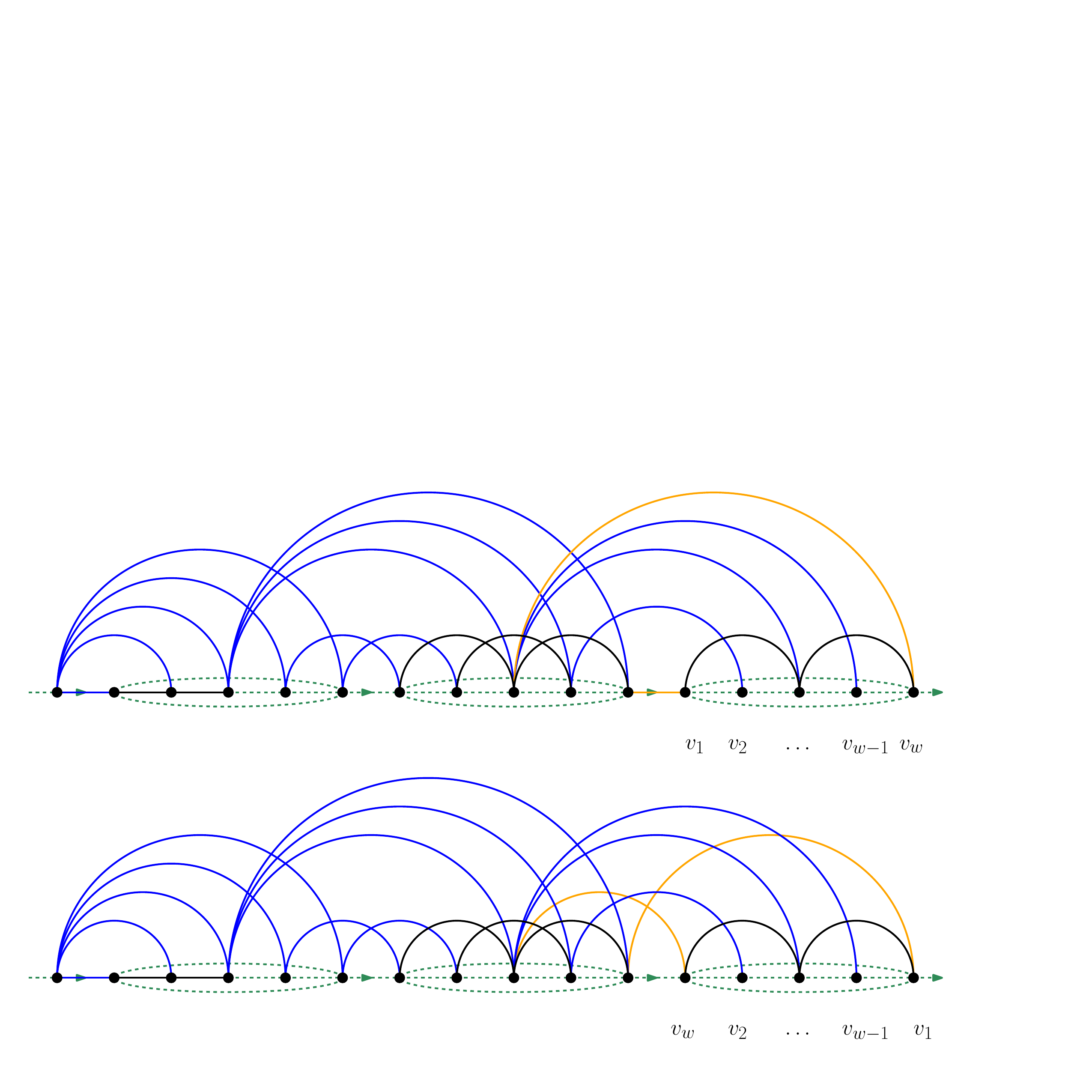}
    \caption{Permutation of vertices within the rightmost layer determines the number of crossings newly introduced when adding this layer. For example, when we swap $v_1$ and $v_w$, the edges that connect them and vertices in the previous layer will lead to different crossings.
    If we were allowed to draw edges below the horizontal line, it could potentially reduce the number of crossings.}
    \label{fig:thm1proof}
\end{figure}
We first explain the high-level strategy.
Suppose that layer $i$ is in state $s_i$ and the total number of possible states for any given layer is at most $S_w$. The exact state will be specified later. Let $c(s_i)$ denote the minimum number of crossings of the partial drawing involving only layer $i$ in state $s_i$ and all earlier layers. At the first layer, we have our base cases. Let $c(s_1)$ denote  the number of crossings introduced by adding layer $1$ in state $s_1$.
At layer $i$, we can ignore the crossings between layers $i$ and $i+1$. Let $c(s_{i-1}, s_i)$ denote the number of crossings introduced by adding layer $i$ in state $s_i$ when layer $i-1$ is in state $s_{i-1}$. 
    Each new crossing is between two edges, involving four vertices, and at least one of these vertices is in layer $i$. We set up the recurrence as follows:
\begin{align*}
\forall s_i \text{ where } i > 1, \ c(s_i) = \min_{s_{i-1}} (c(s_{i-1}) + c(s_{i-1}, s_i)).
\end{align*}

The global minimum number of crossings for drawings with the given layering is $\min_{s_{H}} c(s_{H})$ where $H = O(n)$ is the number of layers. A drawing that achieves this can be found by backtracking to the state which achieved the minimum for the recurrence at each step.
We need to compute $c(s_{i-1}, s_i)$ for every possible combination of  $s_{i-1}$ and $ s_i$ for each $i > 1$ and this dominates the time complexity.
Suppose that we can compute $c(s_{i-1}, s_i)$ in time at most $C_w$ for each $i > 1$. There are $S_w^2$ combinations per step; hence the total time complexity is $T_w = \Theta(C_w S_w^2 H)$.

Next, we specify the possible states needed for any given layer $i$. In style 2, the states are permutations of vertices within layer~$i$. (See \cref{fig:thm1proof}.) 
In style 4, arcs can be drawn either above or below the line. 
The states now need to account for these two possibilities for each new edge  introduced by adding layer $i$. There are at most $\binom{w}{2}$ edges within layer $i$, and at most $w^2$ edges between layer $i-1$ and layer $i$.
Then for each drawing style, we have the following upper bounds on $S_w$:
\begin{enumerate}
    \item If we follow style 2, then $S_w = w!$. 
    \item If we follow style 4, then $S_w \le w! 2^{\binom{w}{2} + w^2} \le w! 2^{\frac{3}{2} w^2} $. 
    % \item We follow style 5. $S_w \le w! 4^{\binom{w}{2} + w^2} \le w! 4^{\frac{3}{2} w^2} $. 
\end{enumerate}

    For the worst-case analysis of $C_w$, suppose that each layer has $w$ vertices. There are at most $n_w \coloneq \binom{w}{2} + w^2$ edges introduced by adding layer $i$. Each pair of these edges may cross, giving $\binom{n_w}{2}$ possible pairs. Furthermore, every new edge may cross a previous edge introduced by adding layer $i - 1$, adding another $n_w^2$ possible pairs. For each pair of edges, the states specify whether they are on the same side of the line. If they are, by \cref{lemma:cross}, we can decide if they cross each other in constant time. 
    In the worst case, we can compute $c(s_{i-1}, s_i)$ in time 
\begin{align*}
    C_w \coloneq \binom{n_w}{2} + n_w^2 = \Theta(w^4).
\end{align*}
We plug these upper bounds into $T_w$ to complete the proof.
\end{proof}

\subsection{Layered Drawings}
\begin{restate}{thm:layered}
    If we have a graph with a layering of width $w$, then we can find a layered drawing with a minimum number of crossings for the given layering in time:
\begin{enumerate}
    \item $T_w \in O(w^4 (w!)^2 n)$ using style 1;
    \item $T_w \in O(w^4 (w!)^2 2^{3 w^2}n)$ using style 3.
\end{enumerate}
\end{restate}
\begin{proof}
The proof is similar to the proof of Theorem \ref{thm:arc}.
The minimum and maximum $x$-coordinates of the vertices can be chosen first. With these values fixed, we set the distances between layers to be sufficiently large. After this point, the exact values of the  $x$-coordinates no longer affect the crossings, and we only need to keep track of the relative positions. Thus, by \cref{lemma:cross,lemma:inter,lemma:mixed}  the states in  the proof of Theorem \ref{thm:arc} also fully determine the number of crossings in styles 1 and 3, and it takes constant time to check if a pair of edges cross each other. The rest follows.
\end{proof}

\subsection{Linear Cylindric Drawings}

\begin{restate} {thm:cylindric}
    If we have a graph with a layering of width $w$, then we can find a layer-respecting linear cylindric drawing with a minimum number of crossings for the given layering in time $T_w \in O(w^4 (w!)^2 4^{3 w^2}n)$ using style 5.
\end{restate}
\begin{proof}
In style 5, recall that the vertices are duplicated, and the top and bottom lines contain the same vertices in the same order (see Figure \ref{fig:5-styles}). Style 5 can be visualized as a layered drawing with two large layers. The crossing conditions are \cref{lemma:cross,lemma:inter,lemma:mixed}.   Therefore, once we choose the permutation of vertices for a given layer of the input graph, each edge can be drawn in one of four possible ways: an arc below the top line or above the bottom line, or a straight line segment going up or down.
These states fully determine the number of crossings.
Thus, this case is similar to the proof for style 4 in Theorem \ref{thm:arc}.
\end{proof}

\subsection{BFS Layering}
\begin{restate}   {cor:time}
        If we have a graph with BFS width $b$, then among its BFS layer-respecting drawings, we can find one with the minimum number of crossings in time:
\begin{enumerate}
    \item $T^* \in O(b^4 (b!)^2 n^2)$ using style 1 or 2;
    \item $T^* \in O(b^4 (b!)^2 2^{3 b^2}n^2)$ using style 3 or 4;
    \item $T^* \in O(b^4 (b!)^2 4^{3 b^2}n^2)$ using style 5.
\end{enumerate}
\end{restate}

\begin{proof}
    For an arbitrary starting vertex $v$, we can compute its BFS layering which will have width at most $b$. By Lemma \ref{lemma:bfs}, the time complexity of BFS is $O(b n)$. 
 This term will be small compared to the rest of the terms as long as $b$ is not close to 1.
    \begin{enumerate}
        \item By Theorems \ref{thm:arc} and \ref{thm:layered}, we can find a drawing in styles 1 or 2 rooted at $v$ with the minimum number of crossings in time $O(b^4 (b!)^2 n)$. We then take the drawing with the minimum number of crossings over all vertices $v$ of the graph, taking $O(b^4 (b!)^2 n^2)$ time total.
        \item Similarly, by Theorems \ref{thm:arc} and \ref{thm:layered}, we can find a drawing in styles 3 or 4 rooted at $v$ with the minimum number of crossings in time $O(b^4 (b!)^2 2^{3 b^2}n)$. Taking the minimum over all vertices will then take  $O(b^4 (b!)^2 2^{3 b^2}n^2)$ time total.
        \item Once again, by Theorem \ref{thm:cylindric}, we can find a drawing in style 5 rooted at $v$ with the minimum number of crossings in time $ O(b^4 (b!)^2 4^{3 b^2}n)$. Taking the minimum over all vertices will then take $ O(b^4 (b!)^2 4^{3 b^2}n^2)$ time total.
    \end{enumerate}
\end{proof}
\end{document}